\documentclass[journal]{IEEEtran}
\usepackage{cite}
\usepackage{amsmath,amssymb,amsfonts}
\usepackage{amsthm}
\newtheorem{theorem}{Theorem}
\newtheorem{example}{Example}
\newtheorem{definition}{Definition}
\newtheorem{proposition}{Proposition}
\newtheorem{remark}{Remark}
\usepackage{algorithm,algorithmic}                                                                                                              
\usepackage{graphicx}
\usepackage{textcomp}
\usepackage{xcolor}
\def\BibTeX{{\rm B\kern-.05em{\sc i\kern-.025em b}\kern-.08em
    T\kern-.1667em\lower.7ex\hbox{E}\kern-.125emX}}
\ifCLASSINFOpdf

\else
\fi
\begin{document}

\title{Signal Recovery from Time and Frequency Samples} 
\author{\IEEEauthorblockN{Mert Kayaalp and Oleg Szehr}
\thanks{The authors are affiliated with the Dalle Molle Institute for Artificial Intelligence (IDSIA) - SUPSI/USI, Via la Santa 1, 6962 Lugano-Viganello, Switzerland. Emails: \{mert.kayaalp, oleg.szehr\}@idsia.ch. 
}
}

\markboth{Preprint}%
{Kayaalp and Szehr}

\maketitle

\begin{abstract}
We analyze signal recovery when samples are taken concomitantly from a signal and its Fourier transform. This two-sided sampling framework extends classical one-sided reconstruction and is particularly useful when measurements in either domain alone are insufficient because of sensing, storage, or bandwidth constraints. We formulate the resulting recovery problem in finite-dimensional spaces and reproducing kernel Hilbert spaces, and illustrate the infinite-dimensional setting in a Fourier-symmetric Sobolev space. Numerical experiments with sinc- and Hermite-based schemes indicate that, under a fixed sampling budget, two-sided sampling often yields better conditioned systems than one-sided approaches. A simplified spectrum-monitoring example further demonstrates improved reconstruction when limited time samples are supplemented with frequency-domain information.
\end{abstract}

\begin{IEEEkeywords}
signal recovery, two-sided sampling, analog-to-digital conversion, non-bandlimited signals, spectrum monitoring
\end{IEEEkeywords}

\IEEEpeerreviewmaketitle

\section{Introduction}

\IEEEPARstart{S}{ampling} and reconstruction are central to analog-to-digital conversion, interpolation, and resampling operations needed in signal processing. The classical Whittaker-Shannon-Nyquist theorem \cite{whittaker1915xviii} guarantees unique and stable recovery of bandlimited signals from sufficiently dense time-domain samples \cite{landau1967,seip2004interpolation}. However, this framework - and most of its extensions - rely on \lq\lq{}one-sided\rq\rq{} measurements, using either time or frequency samples alone.

Recent mathematical advances have established a theory for function interpolation from the values of the function and its Fourier transform~\cite{radchenko2019fourier,Kulikov2021,kulikov2023fourier,szehr2025spectral}. This work adopts an engineering perspective, and adds applications and implementable two-sided sampling schemes that leverage simultaneous time- and frequency-domain samples for practical signal recovery.

In practice, two-sided sampling is most useful when neither domain alone provides sufficient resolution for unique recovery, e.g.~due to sensing, storage, or bandwidth constraints. Applications include spectrum monitoring, radar, wireless communication, and MRI (see Sec.~\ref{sec:motivatingExamples}). Here, limited time-domain data can be supplemented with frequency-domain measurements - either direct Fourier samples, coarse spectral summaries, or a few targeted frequency observations - yielding additional information for reconstruction. This motivates the central question of our work:

\vspace{0.5em}
\noindent \fbox{
\centering
\parbox{0.90\linewidth}{\textit{How can samples from both time and frequency domains be leveraged for practical signal recovery?}
}} \\

We provide a mathematical framework for reconstruction (see Secs.~\ref{sec:reconstruction_inFiniteDimensions} and~\ref{sec:reoncstruction_inINfiniteDimensions}) and numerical evidence that reconstructions with two-sided data improves identifiability (see Secs.~\ref{sec:numerical_exps} and~\ref{sec:spectrum_monitoring}).

\subsection{Motivation}\label{sec:motivatingExamples}

Joint use of time and frequency-domain samples can improve existing sampling architectures and help overcome important practical limitations. 

\textit{Spectrum monitoring \& memory limitations:} 
In many applications the full set of time-domain samples cannot be stored due to memory constraints. This situation is common in {spectrum monitoring}, where storing all digitized time samples can be prohibitive. Thus systems retain only limited time samples after conversion to spectral information \cite{axell2012spectrum,spectrumLunden2015,itur_sm2117_2018}. 
We provide a simplified numerical example for this in Sec.~\ref{sec:spectrum_monitoring}, where supplementing the stored time-information with frequency-domain information improves the reconstruction of the underlying time signal compared to using stored time samples only.

\emph{MRI \& Prior information:}
In some applications, measurements are acquired in one domain, while samples in the other domain are available as a prior information. For example, in \textit{magnetic resonance imaging (MRI)}, the acquired measurements are inherently obtained in the Fourier domain ($k$-space).
At the same time, in some settings, a subset of the spatial-domain samples of the target image may be known in advance (e.g., known background pixels, prior anatomical structure, fiducial markers, or trusted auxiliary/reference data)~\cite{liang1994efficient,zhi1996constrained,weizman2016reference}. 
Such spatial information can be combined with the acquired $k$-space data to improve the reconstruction.

\emph{Communications \& pre-designed signal supports:} 
In many applications, the signals' supports follow a regular structure. For instance, in \emph{radar or communication systems}, transmitted signals often lie on a pre-designed time-frequency grid. In multicarrier systems such as orthogonal frequency-division multiplexing (\textit{OFDM})~\cite{bahai2004multi}, the signals are supported on specific grids defined by global communication standards such as 4G~LTE and 5G~NR \cite{3gpp_ts38211}. 
Since these grids are known, receivers can allocate a small number of parallel analog front-end branches to the corresponding frequencies and acquire narrowband spectral measurements in addition to time-domain samples. 
Such measurements can be obtained with standard RF hardware, for example by using a small analog filter-bank/channelizer or coherent correlator branches tuned to the desired frequencies.
This can become even more valuable in scenarios where frequency supports may shift slightly due to Doppler effects in \textit{high-mobility settings} or due to \textit{hardware imperfections}~\cite{rappaport2010wireless} such as clock-offset, skew and phase noise. 

\subsection{Background \& Related Literature}

The reconstruction of a function from its values on a discrete set is a fundamental problem in analysis. In classical sampling theory, the Whittaker–Shannon-Nyquist theorem asserts that band-limited functions are uniquely determined by samples taken at a rate beyond the Nyquist threshold. Beyond the band-limited setting, this role is assumed by the notion of a \emph{uniqueness set}: a set $\Lambda \subset \mathbb{R}$ such that any function in a given space $\mathcal{X}$ vanishing on $\Lambda$ must vanish identically. The characterization of uniqueness sets is closely tied to the geometry of $\mathcal{X}$ and to quantitative density conditions on $\Lambda$~\cite{seip2004interpolation}.

More recently, in the context of Fourier interpolation and crystalline measures (see~\cite{radchenko2019fourier, ramos2022fourier,kulikov2023fourier,szehr2025spectral} and related works), the notion of a \emph{uniqueness pair} - a pair $(\Lambda,M)$ consisting of a spatial set $\Lambda$ and a frequency set $M$ - has been introduced.
\begin{definition}[Uniqueness pair, e.g.~\cite{kulikov2023fourier}]
    Let $\mathcal{X}$ be a space of functions on $\mathbb{R}$. A pair of sets $(\Lambda, M)$, $\Lambda,M\subset \mathbb{R}$ is a uniqueness pair for $\mathcal{X}$ if for $f\in\mathcal{X}$ the conditions $f|_\Lambda=0$ and $\hat{f}|_M=0$ imply that $f=0$.
\end{definition}
By linearity, $(\Lambda,M)$ is a uniqueness pair for $\mathcal{X}$ if prescribing $f|_{\Lambda}$ and $\hat{f}|_{M}$ determines $f \in \mathcal{X}$ uniquely. Such uniqueness typically requires the sampling densities of $\Lambda$ and $M$ to exceed appropriate critical thresholds. 

To date, contributions in this area have been purely mathematical in nature. The pioneering work or Radchenko-Viazovska~\cite{radchenko2019fourier} introduces mixed Fourier interpolation via modular forms. Subsequent studies~\cite{ramos2022fourier,kulikov2023fourier} determine density conditions for such schemes. They show, in particular, that the construction of~\cite{radchenko2019fourier} lies at a critical threshold consistent with the $2WT/\pi$-barrier of Landau-Pollak-Slepian~\cite{Slepian1961,Slepian1961b,Landau1962}, see~\cite{Kulikov2021}, and also the extremal density condition of~\cite{kulikov2023fourier}. The work \cite{szehr2025spectral} extends the uniqueness theory to transforms beyond the Fourier setting. These mathematical works stop short of offering implementable reconstruction\footnote{The proposed interpolation formulas require sampling densities that increase with time and frequency making a direct translation into practical recovery schemes difficult.} algorithms or recipes accessible to signal processing practice.

From an applied perspective, many systems possess samples in both the time (or spatial) and frequency domains. Yet standard reconstruction pipelines typically treat these data separately, and neglect the intrinsic coupling between a function and its Fourier transform. 
Related in spirit are Papoulis-Gerchberg-type iterative schemes \cite{gerchberg1974super,papoulis1975,hirani1996combining,matsuki2009spectroscopy}, which enforce function space prior constraints such as bandlimitedness.  
They can be interpreted as alternating projections between measurement consistency in one domain and structural constraints in the transform domain \cite{ferreira2002interpolation,marques2006papoulis}. In contrast, we develop a framework for reconstruction from concomitant spatial and Fourier samples that directly exploits the joint information contained in both domains.

This setting should also be distinguished from classical localized time-frequency representations such as wavelet methods.
Those approaches analyze a signal through its correlations with shifted or modulated atoms and produce coefficients indexed by a time-frequency plane. 
In the present work, the data instead consist of direct samples \(f(t_i)\) together with \textit{global} Fourier samples \(\widehat f(\omega_j)\). 
The resulting problem is therefore a mixed time and Fourier interpolation rather than the inversion of a localized time-frequency representation. 
Likewise, the framework is distinct from compressed sensing since our setting imposes no sparsity condition.

Our two-sided approach has several consequences. First, it enables unique recovery beyond the bandlimited setting. Second, when hardware constraints preclude sampling or storing at the Nyquist rate, additional Fourier measurements can significantly enhance reconstruction quality. Third, our numerical experiments indicate that, in finite-dimensional settings, the resulting two-sided linear systems are often better conditioned than their one-sided counterparts under equal sample budgets.

\subsection{Notation}\label{sec:notation}
We adopt the unitary convention for the Fourier transform, denoting time domain variables by $t$ and frequency domain variables by $\omega$,
\begin{align*}
	\hat{f}(\omega) &= {F}[f](\omega) = \frac{1}{\sqrt{2\pi}}\int_{\mathbb{R}}f(t)e^{-i\omega t}\textnormal{d} t,
\end{align*}
with $f(t) =  {F}^*[\hat f(t)]$ for all suitable $f$.

We write $\mathcal{X}$ for a vector space of functions on $\mathbb{R}$. In our exposition, we will focus on two guiding examples of function spaces. The first is the classical Paley-Wiener space of bandlimited functions, 
\begin{equation*}
    \mathcal{PW}_{\Omega}(\mathbb{R}) = \left\{ f \in L^2(\mathbb{R}) : \hat{f}(\omega) = 0 \text{ for } |\omega| > \Omega \right\}.
\end{equation*}
The second example is the Fourier-symmetric Sobolev space (aka.~modulation space), which is defined as~\cite{ehler2024abstract} (see also~\cite{kulikov2023fourier,zelent2025time,szehr2025spectral})
 \begin{equation*}
         \mathcal H \!=\!\!\Big\{f\in L^2(\mathbb R)\ \!\!: \! \! 
    \int_{\mathbb R} \!\!t^2|f(t)|^2\,dt + \! \! \int_{\mathbb R} \!\!\omega^2|\hat f(\omega)|^2\,d\omega < \! \infty\Big\},
 \end{equation*}
where the first term represents the time-domain regularity and the second term reflects the frequency-domain regularity. Due to the symmetry of these terms, the space is invariant under the Fourier transform and therefore constitutes a natural framework for the study of uniqueness pairs. Furthermore, it admits an orthogonal basis of Hermite functions, which are eigenfunctions of the Fourier transform, see below.

We will treat these spaces as spanned by orthogonal bases or as Reproducing Kernel Hilbert Spaces (RKHS). Recall that an RKHS $\mathcal{X}$ of functions on a set \( X \) is defined by the requirement that the point evaluation functional \( f \mapsto f(x) \) is continuous. By the Riesz representation theorem, this is equivalent to the existence of a unique function $K_x\in\mathcal X$ with the reproducing property $f(x)=\langle f,K_x\rangle_{\mathcal X}$, and it induces a positive semidefinite reproducing kernel  \( K: X \times X \to \mathbb{C} \), $K(x,y):=\langle K_y,K_x\rangle_{\mathcal X}$, see~\cite{Aronszajn1950} and~\cite{unser2000} for applications to sampling.

\section{Contributions}
\begin{itemize}
  \item \textit{Formalization of two-sided time-frequency sampling.}
  We formulate reconstruction from measurements consisting of samples of both $f$ in time and samples of $\hat f$ in frequency domains. We describe basis sampling with expansions of $f$ in suitable reconstruction families $\{\Phi_n\}$. For reconstruction, this yields a stacked linear system whose blocks enforce constraints in both domains as a coefficient vector solving problem.
  \item \textit{Two-sided sampling in finite-dimensional spaces.}
   We investigate two-sided sampling in function spaces spanned by a finite family of $N+1$ basis functions. In this finite-dimensional setting, unique signal recovery is equivalent to the existence of a unique solution in a $(N+1)\times (N+1)$ linear system whose entries consist of evaluations of the basis functions and their Fourier transforms at the prescribed time and frequency samples. Commonly used bases in signal processing include shifted sinc functions, Hermite functions, and prolate spheroidal wave functions. We demonstrate the resulting recovery schemes through numerical experiments in this framework. 
   \item \textit{Two-sided sampling in reproducing kernel Hilbert spaces.}
We introduce reconstruction from  two-sided samples within an RKHS framework and derive an explicit kernel-based representation for mixed constraints of $f$ and $\hat f$ on sampling sets $\Lambda$ and $M$, respectively. 

  \item \textit{Two-sided sampling in Fourier-symmetric Sobolev space.} As an illustrative example in an infinite-dimensional setting, we consider sampling in the Fourier-symmetric Sobolev space $\mathcal{H}$, which has been identified as a suitable object for the study of two-sided interpolation in the mathematical literature. The space $\mathcal{H}$ naturally arises in contexts such as frequency modulation~\cite{ehler2024abstract}, and, unlike the Paley-Wiener space, it allows for the study of settings that go beyond the band-limitation constraint.

\item  \textit{Numerical comparison between one- and two-sided sampling:}
We perform numerical experiments to compare reconstruction from one-sided and two-sided samples under a fixed sampling budget. Specifically, we analyze the condition numbers of the resulting finite-dimensional linear systems arising in sinc- and Hermite-based sampling schemes. Our results show that two-sided systems typically exhibit smaller condition numbers, which indicates reduced sensitivity to noise, improved numerical stability of the recovery algorithms, and enhanced robustness of reconstruction.

\item \textit{Spectrum monitoring:} 
As a practical case study, we evaluate a simplified spectrum monitoring application in Sec.~\ref{sec:spectrum_monitoring}, where the system is constrained in the number of time-domain samples it can store. 
We demonstrate that reconstruction performance improves when the available time-domain samples are supplemented with measurements from monitored frequency bins.
\end{itemize}

\section{Reconstruction from Two-Sided Samples}\label{sec:reconstruction_inFiniteDimensions}
The classical Whittaker-Shannon-Nyquist sampling theorem asserts that if a function \(f\) is bandlimited to \(|\omega|\le \pi/T\)  (i.e.~$f\in \mathcal{PW}_{\pi/T}$) then it admits an expansion by taking samples at a sampling step of $T$,
\[
f(t)=\sum_{n\in\mathbb Z} f(nT)\,\operatorname{sinc}\!\left(\frac{t-nT}{T}\right),\ \operatorname{sinc}(x)=\frac{\sin(\pi x)}{\pi x}.
\]
This formula can be viewed simultaneously as a prototypical instance of sampling in an orthonormal basis (the “basis-sampling” picture) and as sampling by point evaluations in a reproducing kernel Hilbert space (the “RKHS” picture)~\cite{unser2000}. In the former picture $f$ is expanded as
\begin{equation}\label{eq:basis_time_expansion}
    f(t)=\sum_{n\in\mathbb Z}\alpha_n\,\Phi_n(t),
\end{equation}
with $\Phi_n(t)=\sqrt{\dfrac{1}{T}}\;\operatorname{sinc}\!\left(\dfrac{t-nT}{T}\right)$. Direct computation reveals that $\Phi_n$ are orthonormal
\begin{align*}
\langle \Phi_n,\Phi_m\rangle&=\int_{-\infty}^{\infty}\Phi_n(t)\,\Phi_m(t)\,dt=\delta_{nm},\ \textnormal{with}\\
\alpha_n&=\langle f,\Phi_n\rangle=\sqrt{T}\,f(nT).
\end{align*}
In the RKHS picture $f$ is expanded in reproducing kernels evaluated at sampling points
\begin{align}\label{eq:RKHSExpansion}
f(t)=\sum_{n\in\mathbb{Z}}\alpha_n K_{t_n}(t)
\end{align}
with
\[
K(x,y)=K_y(x)=\frac{\sin(\pi/T(x-y))}{\pi(x-y)} = \frac 1 T \operatorname{sinc}\left(\dfrac{x-y}{T}\right)
\]
and $t_n=nT$. The expansion formulas~\eqref{eq:basis_time_expansion} and~\eqref{eq:RKHSExpansion} constitute fundamental components of sampling theory and find widespread application beyond the classical bandlimited framework. It is worth noting that in such settings the corresponding kernels need not be orthogonal.

If data or hardware constraints preclude sampling at a sufficiently high rate, aliasing occurs and unique signal reconstruction is no longer possible. In such cases, it is natural to attempt for incorporating additional information, such as measurements obtained in the Fourier domain. We begin by illustrating reconstruction from two-sided sampling in a finite-dimensional setting. To this end, we introduce a sample budget (which roughly corresponds to a maximal sampling frequency) and study signal recovery subject to this constraint in both basis sampling and RKHS sampling scenarios.

\subsection{Finite basis sampling with two-sided samples}\label{sec:finite_basis_sampling}
Assuming the signal lies in a finite vector space $$\mathcal{X}_N=\text{span}\{\Phi_0,....,\Phi_N\},$$ basis sampling represents a signal by expanding it in a given family of reconstruction functions \(\{\Phi_n\}_{n=0}^N\), with coefficients \(\{\alpha_n\}_{n=0}^N\), cf~\eqref{eq:basis_time_expansion}. The coefficients $\{\alpha_n\}_{n=0}^N$ serve as the discrete signal representations for digital processing.

We may also expand the Fourier transform in the corresponding Fourier basis. In particular, taking the Fourier transform (and assuming sufficient regularity) of \eqref{eq:basis_time_expansion} yields
\begin{equation*}
    \hat f (\omega) = \sum_{n=0}^N\alpha_n\, \hat \Phi_n (\omega).
\end{equation*}
If $K$ samples can be obtained from \(f(t)\) and $L$ samples can be obtained from \(\hat f(\omega)\) this yields $K$ linear equations that constrain $\{\alpha_n\}_{n=0}^N$ through time-domain information and $L$ linear equations for the frequency-domain. Combining the two, one can attempt to reconstruct \(f\) by solving a stacked linear system. Specifically, writing \(\vec c\) for the measurements obtained from \(f(t)\) and  \(\vec{\hat c}\) for the measurements obtained from \(\hat f(\omega)\) this yields the block system
\begin{equation}\label{eq:general_basis_inversion}
    \begin{pmatrix}
        \vec c\\[1mm]
        \vec{\hat c}
    \end{pmatrix}
    =
    \begin{pmatrix}
       \left(\Phi_{ij}\right) \\[1mm]
        \left({\hat \Phi}_{ij}\right)
    \end{pmatrix}
    \begin{pmatrix}
        \vec\alpha
    \end{pmatrix}.
\end{equation}
Here we write $\left(\Phi_{ij}\right)$ for the matrix with entries $\Phi_{ij}=\Phi_j(t_i)$, $0\leq j\leq N$, $0\leq i\leq K-1$ and $\left(\hat\Phi_{ij}\right)$ for the matrix with entries \(\hat\Phi_{ij}=\hat\Phi_j(\omega_i)\), $0\leq j\leq N$, $0\leq i\leq L-1$ and $\vec\alpha$ for the coefficient vector of $\{\alpha_n\}_{n=0}^N$.
The solution set may be empty (no solution, i.e., inconsistency), a singleton (exactly one solution), or non-unique (infinitely many solutions), depending on the system's rank conditions and consistency. Assuming  $f\in\mathcal{X}_N$ and exact measurements guarantees the existence of a solution, but  uniqueness necessitates $K+L\geq N+1$ and is not guaranteed in general. 
Systems of the form~\eqref{eq:general_basis_inversion} can be treated uniformly via the Moore--Penrose pseudoinverse $A^{\dagger}$. For $Ax=b$, the pseudoinverse yields $x^\star = A^{\dagger} b$, which is a least-squares solution when the system is overdetermined (minimizing $\|Ax-b\|_2$); when multiple least-squares solutions exist, it selects the minimum-norm one. In the underdetermined case, it returns the minimum-norm solution among all exact solutions when $b\in \mathrm{range}(A)$, and otherwise the minimum-norm least-squares solution. The pseudoinverse arises most transparently from the singular value decomposition: if $A=U\Sigma V^\dagger$ (with singular values $\sigma_i$ on the diagonal of $\Sigma$), then
\[
A^{\dagger}=V\Sigma^{\dagger}U^\dagger,\qquad
\Sigma^{\dagger}=\mathrm{diag}\!\left(\sigma_i^{-1}\;|\;\sigma_i>0\right),
\]
with zeros left in place of any vanishing singular values. Thus the singular values determine both existence/uniqueness through $\mathrm{rank}(A)$ and numerical behavior: small $\sigma_i$ correspond to weakly constrained directions and lead to ill-conditioned behavior in the sense of amplification in $A^{\dagger}b$. We study conditioning experimentally in Sec.~\ref{sec:numerical_exps}. The above procedure is summarized in Algorithm~\ref{alg:finite_basis}.

\begin{algorithm}[]
\caption{Finite basis reconstruction from two-sided samples}
\label{alg:finite_basis}
\begin{algorithmic}[1]
\REQUIRE Basis functions $\{\Phi_n\}_{n=0}^{N}$; sampling points in time  $\{t_i\}_{i=0}^{K-1}$ and frequency $\{\omega_i\}_{i=0}^{L-1}$; measurements $\vec c$, $\vec{\hat c}$. 
\STATE Form the time block $A_t=(\Phi_j(t_i))_{i,j}$.
\STATE Form the frequency block $A_\omega=(\hat{\Phi}_j(\omega_i))_{i,j}$.
\STATE Set $A=[\begin{smallmatrix}A_t\\ A_\omega\end{smallmatrix}]$ and $b=(\vec c,\vec{\hat c})^\top$.
\STATE Compute $\vec\alpha^\star=A^\dagger b$
\ENSURE Reconstructed function $f^\star=\sum_{n=0}^{N}\alpha_n^\star\Phi_n$.
\end{algorithmic}
\end{algorithm}

The rigorous study of uniqueness quickly leads into deep mathematics. A polynomial of degree $N$ is uniquely determined by its values at $N+1$ distinct nodes. Consequently, if $\mathcal{X}_N$ is (essentially) a polynomial space, any non-degenerate measurements yield a (one-sided) uniqueness set. Yet the corresponding two-sided linear system may  admit multiple solutions. Thus, even in finite dimension and in the polynomial setting, a uniqueness set for pointwise interpolation does not automatically induce a unique solution for the stacked system. We illustrate this point with the following example.

\subsubsection{The finite Hermite function space}
The Hermite functions are defined by the relation
$$\varphi_n(x)=\frac{1}{\sqrt{n!}}\left(\frac{1}{\sqrt{2}}(x-\partial_x)\right)^n\varphi_0(x),\ n\geq1,$$ with $\varphi_0=\pi^{-1/4}e^{-x^2/2}$. It is obvious that every Hermite function has the form $\varphi_n(x)= \frac{1}{\pi^{1/4}\sqrt{2^nn!}}H_n(x)e^{-x^2/2}$ with a degree-$n$ polynomial $H_n$. These polynomials are the Hermite polynomials. We consider the $(N+1)$-dimensional Hermite function space
\[
\mathcal{H}_N=\mathrm{span}\{\varphi_0,\dots,\varphi_{N}\},
\]
where any $f\in \mathcal{H}_N$ admits the unique expansion $$f(t)=\sum_{n=0}^{N} \alpha_n\,\varphi_n(t).$$
The Hermite functions are eigenfunctions of the Fourier transform
\[
F[\varphi_n](w)=(-i)^n\,\varphi_n(\omega),\qquad n\ge 0.
\]
Therefore for $f\in \mathcal{H}_N$, it holds that
\[
\hat f(\omega)=\sum_{n=0}^{N} \alpha_n\,(-i)^n\,\varphi_n(\omega).
\]
\paragraph{One-sided sampling}
Pure time or frequency sampling at $N+1$ distinct nodes allows for unique reconstruction. If measurements are taken from either time or frequency domain, then the system of equations~\eqref{eq:general_basis_inversion} only contains
$\Phi_{i,j}=\varphi_{j}(t_i)$ or $\hat\Phi_{i,j}=(-i)^j\varphi_{j}(\omega_i)$. Making use of the Hermite expansion, any $f\in \mathcal{H}_N$ can be written in the form $f(x)=e^{-x^2/2}\,p(x)$ with a polynomial $p$ of degree $N$. Prescribing $N+1$ zeros $f(t_j)=0$ at distinct points in time implies that the degree $N$ polynomial $p$ has $N+1$ distinct zeros $p(t_j)=0$. Thus $p=0$ and $f=0$. In other words any distinct points $\{t_0<t_1<....<t_N\}\subset\mathbb{R}$ constitute a uniqueness set for $\mathcal{H}_N$. Along the lines of Proposition~\ref{prop:uniqueness_pair}, a unique signal interpolates $N+1$ distinct sampling points. The same reasoning applies also for one-sided frequency domain measurements.
\paragraph{Two-sided sampling}
The situation is more delicate in the case of two-sided sampling. Choosing $K$ distinct measurements from the time domain and $L$ distinct measurements from the frequency domain with $K+L=N+1$ is not guaranteed to yield a unique solution. We illustrate this with a simple example:
\begin{example}\label{examp:nonunique_f}
Consider the space $\mathcal{H}_2$, where $N+1=3$ measurements are taken in total of which $K=1$ measurements are taken from the time domain and $L=2$ from the frequency domain: 
\[
t_0=0,\qquad \omega_0=1,\qquad \omega_1=-1.
\]
In this space, consider the function: 
\[
f=\frac{1}{\sqrt2}\varphi_0+\varphi_2 \in \mathcal{H}_2
\]
with $\hat{\varphi}_0=\varphi_0$ and $\hat{\varphi}_2=(-i)^2\varphi_2=-\varphi_2$, so that
\[
\hat f=\frac{1}{\sqrt2}\varphi_0-\varphi_2\in \mathcal{H}_2.
\]
We have
\[
f(t_0=0)=\frac{1}{\sqrt2}\varphi_0(0)+\varphi_2(0)
=\frac{1}{\sqrt2}(\pi^{-1/4}-\pi^{-1/4})=0.
\]
Since $\varphi_0$ and $\varphi_2$ are even and
\[
\varphi_0(\pm1)=\pi^{-1/4}e^{-1/2},
\quad
\varphi_2(\pm1)=\frac{1}{\sqrt2}\pi^{-1/4}e^{-1/2}
\]
we find
\[
\begin{aligned}
    \hat f(\omega_{0/1}=\pm1)
    =\frac{1}{\sqrt2}\varphi_0(\pm1)-\varphi_2(\pm1) 
    =0.
\end{aligned}
\]
Thus, even though $f\not\equiv 0$, it satisfies
\[
f(t_0)=0,\qquad \hat f(\omega_0)=0,\qquad \hat f(\omega_1)=0.
\]
In other words $\Lambda=\{0\}, M=\{-1,1\}$ is not a uniqueness pair for $\mathcal{H}_2$.
\qed
\end{example}

Continuing in the space $\mathcal{H}_2$, more broadly, if $A$ denotes the matrix of the $3\times 3$ stacked linear system for one measurement in the time and two measurements in the frequency domain (cf.~\eqref{eq:general_basis_inversion}), a straightforward computation reveals
\[
\det(A)
= C\cdot(\omega_0-\omega_1)\Bigl(t_0^2-\omega_0\omega_1-1+i\,t_0(\omega_0+\omega_1)\Bigr),
\]
for some $C=C(t_0,\omega_0,\omega_1) \neq 0$. Thus, if $\omega_0\neq\omega_1$ then
\[
\det(A)=0
\quad\Longleftrightarrow\quad
t_0^2-\omega_0\omega_1-1+i\,t_0(\omega_0+\omega_1)=0.
\]
For real $t_0,\omega_0,\omega_1$, this complex equation is equivalent to the two real equations
\[
t_0^2-\omega_0\omega_1-1=0,
\qquad
t_0(\omega_0+\omega_1)=0.
\]
The system is singular, $\det(A)=0$ iff both equations hold simultaneously. 

For \(t_0\neq 0\) we must have \(\omega_0=-\omega_1\), and hence $t_0^2+\omega_0^2=1$.
Therefore real singular configurations exist only for \(|t_0|<1\), in which case there are exactly two ordered solutions,
\begin{align*}
(\omega_0,\omega_1) &=\bigl(\sqrt{1-t_0^2},-\sqrt{1-t_0^2}\bigr) \\ 
(\omega_0,\omega_1) &=\bigl(-\sqrt{1-t_0^2},\sqrt{1-t_0^2}\bigr)
\end{align*}

For $t_0=0$, these conditions imply $\omega_0\omega_1 = -1$.
Fig.~\ref{fig:d3_m1_example} illustrates this case. 
It shows a heat map of $\log(\frac{\sigma_{\min}(A)}{\sigma_{\max}(A)})$ as the frequency sampling points \(\omega_0\) and \(\omega_1\) vary. 
Configurations with ratios $\sigma_{\min}/\sigma_{\max}$ below $1.85^*10^{-5}$ are considered numerically singular and are represented by contours.

\begin{figure}[h]
\centering
\includegraphics[width=0.38\textwidth]{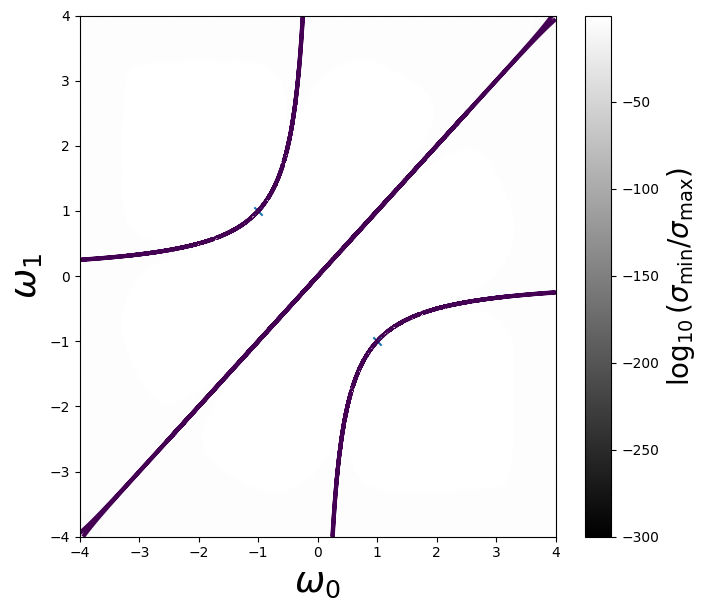}
\caption{Plot of $\log(\sigma_{\min}(A)/\sigma_{\max}(A))$ for the stacked $3\times 3$ system in $\mathcal{H}_2$, with $t_0=0$ fixed, as functions of $\omega_0$ and $\omega_1$. Numerically singular configurations (ratios below $1.85\times 10^{-5}$) occur along the diagonal $\omega_0=\omega_1$ (corresponding to repeated measurements) and the hyperbola $\omega_0\omega_1=-1$.}
\label{fig:d3_m1_example}
\end{figure}

From Fig.~\ref{fig:d3_m1_example}, it is evident that matrix inversion is possible for almost all points in the sampling space, which therefore yield unique signal reconstruction. 
As expected from our analysis, reconstruction fails along the diagonal, $\omega_0=\omega_1$, and also along two symmetric contours $\omega_0\omega_1 = -1$ passing through the counterexample points $(1,-1)$ and $(-1,1)$ from Example~\ref{examp:nonunique_f}.

More generally, the zero set of a non-zero polynomial in $m$ variables has Lebesgue measure zero in $\mathbb{R}^m$, which implies that a matrix sampled at random (according to a distribution that is absolutely continuous wrt.~Lebesgue measure) is invertible with probability one. This alone does, however, not imply that square matrices composed from entries $\Phi_j(t_i)$ and $\hat \Phi_j(\omega_i)$ are invertible for almost all finite configurations of $(t_i)_{i=0}^{K-1},(\omega_i)_{i=0}^{L-1}$.
A standard condition is that the mapping 
\[
\left ( (t_i)_{i=0}^{K-1},(\omega_i)_{i=0}^{L-1} \right ) \mapsto\det(A)
\]
is real-analytic. In this case the zero set has Lebesgue measure zero. This is true for the Hermite functions, but providing precise conditions for uniqueness pairs for $\mathcal H_N$ is a difficult task.

\subsection{Finite RKHS sampling with two-sided samples}
If $\mathcal{X}$ is a finite-dimensional RKHS, $\dim(\mathcal X)=N+1<\infty$,
one may choose a basis $\{\Phi_0,\dots,\Phi_N\}$ and identify $\mathcal X=\mathrm{span}\{\Phi_i\}$
equipped with an inner product specified by a symmetric positive definite Gram matrix
$G\in\mathbb R^{(N+1)\times (N+1)}$, i.e., for $f=\sum_i a_i\Phi_i$ and $g=\sum_i b_i\Phi_i$, set
$\langle f,g\rangle_{\mathcal X}=a^\top G\,b$. The reproducing kernel then admits the explicit
representation
\[
K(x,y)=\vec\Phi(x)^\top G^{-1}\vec\Phi(y),
\]
where $\vec\Phi(x)=(\Phi_0(x),\dots,\Phi_N(x))^\top$; equivalently, every finite-dimensional RKHS
corresponds to a feature map into $\mathbb R^{N+1}$ with a weighted Euclidean inner product, and the
reproducing property follows from the simple computation
\begin{align*}
\langle f,K_x\rangle_\mathcal X &= a^\top G ((\vec\Phi(x)^\top G^{-1})_0,...,(\vec\Phi(x)^\top G^{-1})_N)\\
&=\sum_ia_i\Phi_i(x)=f(x).
\end{align*}
One-sided sampling is governed by Eq.~\eqref{eq:RKHSExpansion}. To incorporate two-sided samples we will need an appropriate mathematical framework, which is provided in Def.~\ref{def:twosidedRKHS} and the subsequent discussion below. For now we prefer to keep the exposition heuristic - assuming the following quantities are all well-defined to convey the intuition of two-sided approach. As the kernel depends on two variables, we specify, which variable the Fourier transform applies to, following the convention:
\begin{align*}
\text{Fourier } &\text{transform applied to first variable of $K(t,\cdot)$:}\\
&{F}\big(K_s(t)\big)
=
\widehat{K_s}(\omega)\ \textnormal{(wide hat)},\\
\text{Fourier } &\text{transform applied to second variable of $K(\cdot,s)$:}\\
&{F}\big(K_s(t)\big)
=
\overline{L_\xi(t)}.
\end{align*}
Suppose now that $K$ samples can be obtained from \(f(t)\) and $L$ samples can be obtained from \(\hat f(\omega)\). Consider the two-sided expansion formula, which we will justify formally using a two-sided representer theorem, Thm.~\ref{thm:representer}, below:
\begin{align}\label{eq:toJustify}
f(t)
=
\sum_{i=0}^{K-1} \alpha_i\,K_{t_i}(t)
\;+\;\sum_{i=0}^{L-1} \beta_i\,L_{\omega_i}(t).
\end{align}
Applying ${F}$ to this equation yields
\[
\hat{{f}}(\omega)
=
\sum_{i=0}^{K-1} {\alpha}_i\,\widehat{K_{t_i}}(\omega)
\;+\;
\sum_{i=0}^{L-1}{\beta}_i\,\widehat{{L}_{\omega_i}}(\omega).
\]
Suppose we observe measurements $\vec c$ from $f(t)$ and $\vec{\hat c}$ from $\hat f(\omega)$, then the two equations yield a stacked system of equations:
\begin{align}
\begin{pmatrix}
{\vec{c}}\\[1mm]
\vec{{\hat{c}}}
\end{pmatrix}
=
\begin{pmatrix}
(K_{t_i}(t_j)) & ({L_{\omega_i}}(t_j))\\[1mm]
(\widehat{K_{t_i}}(\omega_j)) & {(\widehat{L_{\omega_i}}}(\omega_j))
\end{pmatrix}
\begin{pmatrix}
(\vec\alpha)\\
(\vec\beta)
\end{pmatrix}.\label{eq:RKHSOptimization}
\end{align}
Here we write $\left(K_{t_i}(t_j)\right)$ for the $K\times K$ matrix with entries $K_{t_i}(t_j)$, $0\leq i,j\leq K-1$, $\left(L_{\omega_i}(t_j)\right)$ for the $K\times L$ matrix with entries $L_{\omega_i}(t_j)$, $0\leq i\leq L-1, 0\leq j\leq K-1$, $(\widehat{K_{t_i}}(\omega_j))$ for the $L\times K$ matrix with entries $\widehat{K_{t_i}}(\omega_j)$, $0\leq i\leq K-1$ and $0\leq j\leq L-1$, finally, $(\widehat{L_{\omega_i}}(\omega_j))$ for the $L\times L$ matrix with entries $\widehat{L_{\omega_i}}(\omega_j)$, $0\leq i,j\leq L-1$. As outlined in the case of basis sampling, this system might have an empty, a singleton or an infinite solution set, but it can be treated uniformly by the Moore-Penrose pseudoinverse. We summarize the finite RKHS procedure in Algorithm~\ref{alg:finite_rkhs}. 

\begin{algorithm}[]
\caption{Finite RKHS reconstruction from two-sided samples}
\label{alg:finite_rkhs}
\begin{algorithmic}[1]
\REQUIRE Finite two-sided RKHS $\mathcal X$ with kernel $K$; sampling points in time  $\{t_i\}_{i=0}^{K-1}$ and frequency $\{\omega_i\}_{i=0}^{L-1}$; measurements $\vec c$, $\vec{\hat c}$. 
\STATE Set $K_x(s)=K(s,x)$ and define $L_\omega(x)=\overline{\widehat{K_x}(\omega)}$.
\STATE Assemble the RKHS block matrix $B$ in~\eqref{eq:RKHSOptimization}.
\STATE Set $b=(\vec c,\vec{\hat c})^\top$.
\STATE Compute $(\vec\alpha^\star,\vec\beta^\star)^\top=B^\dagger b$.
\ENSURE $f^\star(t)=\sum_i\alpha_i^\star K_{t_i}(t)+\sum_j\beta_j^\star L_{\omega_j}(t)$.
\end{algorithmic}
\end{algorithm}

\section{Reconstruction in the Infinite \\ Dimensional Case}\label{sec:reoncstruction_inINfiniteDimensions}
In signal sampling, natural sampling spaces like $\mathcal{PW}_T,\mathcal{H}$ have infinite dimension. This corresponds to a regime $N\to\infty$, where the number of measurements (cf.~eq.~\eqref{eq:general_basis_inversion} and \eqref{eq:RKHSOptimization}) \lq\lq{}grows simultaneously with the dimension of the space\rq\rq{}.
\subsection{Basis sampling with two-sided samples}
As the size of the system of equations (cf.~\eqref{eq:general_basis_inversion}, \eqref{eq:RKHSOptimization}) increases, the smallest singular value may decay toward zero, degrading conditioning and potentially leading to loss of injectivity in the limit. Thus, even if each finite-$N$ problem admits a unique solution, the large-$N$ system may become non-unique.
The notion of a uniqueness pair ensures stability in this respect.
\begin{proposition}\label{prop:uniqueness_pair}
If $(\Lambda,M)$ is a uniqueness pair for a (potentially infinite-dimensional) function space $\mathcal{X}$ and the countable system of constraints admits a solution, then the solution is unique. 
\end{proposition}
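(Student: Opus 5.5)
The plan is to reduce uniqueness to the definition of a uniqueness pair through linearity, exactly as in the remark following the definition. First I will make precise what ``the countable system of constraints'' means. Given data $\vec c=(c_\lambda)_{\lambda\in\Lambda}$ and $\vec{\hat c}=(\hat c_\mu)_{\mu\in M}$, a solution is a function $f\in\mathcal X$ with
\[
f(\lambda)=c_\lambda\ \ (\lambda\in\Lambda),\qquad \hat f(\mu)=\hat c_\mu\ \ (\mu\in M).
\]
In the basis picture, $f=\sum_n\alpha_n\Phi_n$ and the constraints are the rows of the infinite analogue of~\eqref{eq:general_basis_inversion}. In that case I will additionally assume that the expansion map $\vec\alpha\mapsto\sum_n\alpha_n\Phi_n$ is injective on the admissible coefficient sequences, for instance because $\{\Phi_n\}$ is a Riesz or orthonormal basis. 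Then uniqueness of $f$ transfers to uniqueness of $\vec\alpha$.

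Next, suppose $f_1,f_2\in\mathcal X$ both solve the system and set $g=f_1-f_2$. Since $\mathcal X$ is a vector space, $g\in\mathcal X$. Point evaluation is linear, so $g(\lambda)=c_\lambda-c_\lambda=0$ for every $\lambda\in\Lambda$. The Fourier transform is linear on $\mathcal X$, so $\hat g=\hat f_1-\hat f_2$ and $\hat g(\mu)=0$ for every $\mu\in M$. Thus $g|_\Lambda=0$ and $\hat g|_M=0$, and the definition of a uniqueness pair gives $g=0$, that is $f_1=f_2$. In coefficient form, $\sum_n(\alpha^{(1)}_n-\alpha^{(2)}_n)\Phi_n=0$, and the assumed injectivity of the expansion gives $\vec\alpha^{(1)}=\vec\alpha^{(2)}$.

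The main obstacle is not algebraic but concerns well-definedness in infinite dimensions. One must make sure that for every $f\in\mathcal X$ the values $f(\lambda)$ and $\hat f(\mu)$ are meaningful pointwise, so that ``$g|_\Lambda=0$'' is unambiguous rather than only an almost-everywhere statement. One must also make sure that $F$ acts linearly on $\mathcal X$, including on the difference $g$. For $\mathcal{PW}_\Omega$ and $\mathcal H$ both points hold: their elements are continuous, and $\hat f$ is continuous as well (compactly supported $L^2$ spectrum for $\mathcal{PW}_\Omega$; $\hat f$ lies in $\mathcal H$ by Fourier symmetry). For a general $\mathcal X$ I would state these as standing assumptions, or work in a two-sided RKHS where time and frequency evaluations are bounded linear functionals. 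In the basis picture, one must also ensure that the series $\sum_n\alpha_n\Phi_n$ and $\sum_n\alpha_n\hat\Phi_n$ converge pointwise, so that the row equations really encode $f(\lambda)$ and $\hat f(\mu)$. With these points secured, the argument above is complete.
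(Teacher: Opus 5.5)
Your argument is correct and matches the paper's: the difference of two solutions satisfies the homogeneous constraints, so the uniqueness-pair property forces it to vanish, and your explicit injectivity assumption on $\vec\alpha\mapsto\sum_n\alpha_n\Phi_n$ makes precise the step from $f=0$ back to $\vec\alpha^{(1)}=\vec\alpha^{(2)}$ that the paper leaves implicit. One correction to your side remark: on $\mathcal{PW}_\Omega$ the transform $\hat f$ is only an $L^2$ function on $[-\Omega,\Omega]$ and need not be continuous or even bounded (e.g.\ $\hat f(\omega)=|\omega|^{-1/4}$ there), so frequency point values are not well defined on all of $\mathcal{PW}_\Omega$, and your standing assumptions hold for $\mathcal H$ but not for $\mathcal{PW}_\Omega$.
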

\begin{proof}
Assume $\vec\alpha^{(1)}, \vec\alpha^{(2)}$ are two solutions, then $\vec\alpha^{(1)}-\vec\alpha^{(2)}$ lies in the kernel of the linear system. The function $f$ represented by the coefficients $\{\alpha_n^{(1)}-\alpha_n^{(2)}\}$ thus satisfies $f|_\Lambda=0$ and $\hat{f}|_M=0$, which is a contradiction unless $f=0$.
\end{proof}

The Paley-Wiener model treats signals as exactly bandlimited, so that the signals' complexity is determined by the bandwidth. A more flexible model, developed by Landau, Pollak, and Slepian, considers functions that are simultaneously well concentrated in frequency on \([-W,W]\) and in time on \([-T,T]\). The associated time-frequency concentration operator quantifies how compatible these two localization requirements are. The spectrum of this operator has a sharp plunge~\cite{Slepian1961,Slepian1961b,Landau1962}: approximately ${2TW}/{\pi}$ eigenvalues are close to \(1\), while the rest are close to \(0\), up to a narrow transition region. Thus, the space of signals that can be both time- and band-concentrated behaves, to first order, like a finite-dimensional space of dimension \(2TW/\pi\). Equivalently, this is the number of orthogonal modes that carry essentially all of the energy of such signals.

This motivates a finite-dimensional analysis of two-sided reconstruction in Sec.~\ref{sec:reconstruction_inFiniteDimensions}. In particular, any reconstruction based on $K$ time samples and $L$ frequency samples must satisfy the necessary condition $K+L\geq {2TW}/{\pi}$. The prolate spheroidal wave functions, being the eigenfunctions of the time-frequency concentration operator, provide the natural basis for the class of signals that are simultaneously concentrated in time and frequency. When the dimension becomes large, these functions approximate Hermite functions and the respective model spaces \lq\lq{}converge\rq\rq{} to the Fourier-symmetric Sobolev space $\mathcal{H}$.

\subsubsection{The Fourier-symmetric Sobolev space} Using integration by parts and Plancherel's identity, the norm on $\mathcal H$ can be written in the form
$$||f||^2_{\mathcal H}=2\int_{\mathbb R} f \overline{H f} \textnormal{d}x,\ \textnormal{where}\ H=\frac{1}{2}\left(x^2-\frac{\partial^2}{\partial x^2}\right).$$
Thus the natural scalar product for $\mathcal{H}$ is $\langle f,g\rangle = 2\int_{\mathbb{R}} f \overline{H g}\:\textnormal{d}x$. $H$ is the classical Hamiltonian operator of the quantum harmonic oscillator. It is well-known to be a self-adjoint and positive operator, whose eigenvectors are the Hermite functions
$$H\varphi_n=E_n\varphi_n,\ \ E_n=n+\frac{1}{2}.$$
Thus in the Hermite basis $\mathcal H$ is the weighted $l^2$-space with norm
$$||f||_\mathcal H=\sum_{n=0}^\infty E_n|\alpha|^2,\quad f=\sum_{n=0}^\infty \alpha_n\varphi_n.$$

The Fourier-invariance of $\mathcal H$ manifests itself in the operator identity $H=\hat H$ in the sense that
\[
H=\frac12\left(x^2-\frac{\partial^2}{\partial x^2}\right),
\qquad
\hat H=\frac12\left(k^2-\frac{\partial^2}{\partial k^2}\right)
\]
and the fact that Hermite functions are eigenfunctions of the Fourier transform. $\mathcal H$ constitutes the natural infinite-dimensional extension of the finite Hermite function spaces considered in Sec.~\ref{sec:finite_basis_sampling}. Uniqueness pairs for reconstruction in two-sided sampling in $\mathcal H$ have been identified recently in terms of a density condition $(\Lambda,M)$. The pair $(\Lambda = \{\lambda_j\}_j,M=\{\mu_j\}_j)$ is, respectively, called supercritical $(<)$ or subcritical $(>)$ if simultaneously the following conditions hold\footnote{Our threshold differs from~\cite{kulikov2023fourier} since we use the unitary convention for the Fourier transforms.}
\begin{align*}
&\lim\sup\{|\lambda_j|(\lambda_{j+1}-\lambda_j)\}\lessgtr\pi,\\ &\lim\sup\{|\mu_j|(\mu_{j+1}-\mu_j)\}\lessgtr\pi.
\end{align*}
\begin{proposition}[Uniqueness pairs in $\mathcal{H}$~\cite{kulikov2023fourier}]\label{prop:uniqueness_pair_inH}
$\ $
\begin{enumerate}
    \item If $(\Lambda,M)$ is supercritical then it is a uniqueness pair for $\mathcal H$.
    \item If $(\Lambda,M)$ is subcritical then it is a non-uniqueness pair for $\mathcal H$.
\end{enumerate}
\end{proposition}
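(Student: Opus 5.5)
The plan is to reduce both parts to the density theorem of Kulikov--Nazarov--Sodin \cite{kulikov2023fourier}. The work is in checking that their hypotheses, stated in another normalization, match the ones here. Their Fourier transform is $\hat f(\xi)=\int f(x)e^{-2\pi i x\xi}\,dx$, and they work in the space with $\int x^2|f|^2+\int\xi^2|\hat f|^2<\infty$. Their critical condition reads $\limsup |\lambda|\,(\lambda_{j+1}-\lambda_j)\lessgtr 1$, and likewise for $M$; I will recheck the exact constant against their statement.

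\emph{Step 1: change of variables.} Define $(Df)(x)=(2\pi)^{1/4}f(\sqrt{2\pi}\,x)$. A direct computation shows that $F\circ D$ equals $D$ composed with the $2\pi$-convention transform. So $D$ intertwines the two conventions.

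\emph{Step 2: the space is preserved.} $D$ maps $\mathcal H$ bijectively onto the Kulikov--Nazarov--Sodin space, because both weighted $L^2$ conditions are invariant under dilations up to constants.

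\emph{Step 3: translate the sampling sets.} A pair $(\Lambda,M)$ is a uniqueness pair for $\mathcal H$ in our convention if and only if $(\Lambda/\sqrt{2\pi},\,M/\sqrt{2\pi})$ is a uniqueness pair in theirs. Rescaling by $c=1/\sqrt{2\pi}$ multiplies each product $|\lambda_j|(\lambda_{j+1}-\lambda_j)$ by $c^2=1/(2\pi)$. Hence their threshold $1$ corresponds to $2\pi$ in our variables, not $\pi$. I expect the main obstacle to be here: reconciling the constant, since the footnote asserts the threshold $\pi$. I would first check whether their theorem is stated with the symmetric quantity $|\lambda_j|(\lambda_{j+1}-\lambda_j)\lessgtr 1/2$. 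Equivalently, their quantity might be $\limsup$ of the spacing times $2|\lambda|$; this would give $\pi$ after rescaling.

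\emph{Step 4: conclude.} Part 1, supercriticality implying uniqueness, then follows verbatim from their uniqueness theorem. Part 2, subcriticality implying non-uniqueness, follows from their construction of a nonzero $f$ vanishing on both sets. Non-uniqueness transfers back through $D^{-1}$.

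If a self-contained sketch of their argument is wanted instead, I would outline it as follows. For uniqueness, expand $f$ in Hermite functions and bound the growth of the Fourier--Laplace and Bargmann transforms. Then apply a Phragm\'en--Lindel\"of/Jensen counting argument, using the fact that the sample density grows like $|\lambda|$. For non-uniqueness, build a nonzero $f$ by a compactness/duality argument from the interpolation formulas. I regard this sketch as far beyond a routine step, so the citation route is the plan.
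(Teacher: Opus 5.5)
\textbf{The gap: the constant is left unresolved, and your working value is wrong.} Your route is the paper's route. The paper gives no argument of its own: it cites Kulikov--Nazarov--Sodin and adds a footnote attributing the threshold $\pi$ to the unitary convention, which is exactly your rescaling. But the one thing this route requires you to verify is the constant, and you leave it open. If the threshold in the $e^{-2\pi i x\xi}$ normalization were $1$, your (correct) scaling would give $2\pi$ and contradict the statement. So as written, the proposal does not establish it.

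The symmetric threshold in that normalization is $\tfrac12$, and you can pin this down without the reference:
\begin{itemize}
\item The Radchenko--Viazovska nodes $\{\pm\sqrt n\}$ are exactly critical there, and $\sqrt n(\sqrt{n+1}-\sqrt n)\to\tfrac12$.
\item Under $\Lambda\mapsto\sqrt{2\pi}\,\Lambda$ this becomes the paper's own computation $\sqrt{2\pi n}\,(\sqrt{2\pi(n+1)}-\sqrt{2\pi n})\to\pi$, which places these nodes on the boundary of Prop.~\ref{prop:uniqueness_pair_inH}.
\item As a heuristic check in the paper's own units: gap products tending to $c$ give about $T^2/c$ points in $[-T,T]$. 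The Landau--Pollak--Slepian count $T^2/c+W^2/c\ge 2TW/\pi$ holds for all $T,W>0$ iff $c\le\pi$.
\end{itemize}

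\textbf{Step 1: the intertwining runs the other way.} With $Df(x)=(2\pi)^{1/4}f(\sqrt{2\pi}\,x)$ and $F_1$ the $2\pi$-normalized transform, one has $F_1\circ D=D\circ F$, not $F\circ D=D\circ F_1$. The former is the relation Step 3 actually uses, and Step 3's conclusion that $(\Lambda,M)\leftrightarrow(\Lambda/\sqrt{2\pi},M/\sqrt{2\pi})$ is correct.

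\textbf{Part 2 does not transfer automatically.} A non-uniqueness theorem must bound \emph{all} sufficiently large gaps from below, i.e.\ a $\liminf$-type hypothesis. The condition $\limsup>\pi$ only guarantees infinitely many large gaps. The classical analogue shows this is not enough: $\tfrac12\mathbb Z$ with bounded blocks removed near the points $2^k$ has lower Beurling density $2$. It is therefore still a uniqueness set for $\mathcal{PW}_\pi$, even though its gaps exceed the critical spacing $1$ infinitely often. So state explicitly which form of subcriticality the cited non-uniqueness construction assumes, and check that it matches the definition used in the statement.
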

Algorithm~\ref{alg:infinite_basis} summarizes the discussion above on the effective finite-dimensional use of basis expansions and the exact uniqueness statements for $\mathcal H$.
\begin{algorithm}[h]
\caption{Two-sided basis sampling in infinite-dimensional spaces}
\label{alg:infinite_basis}
\begin{algorithmic}[1]
\REQUIRE Two-sided samples on sets $\Lambda=\{t_i\}$ and $M=\{\omega_j\}$; an infinite-dimensional basis model for $\mathcal X$. \vspace{0.3em}
\item[\textbullet]  \textbf{Effective finite-dimensional approximation.}
\STATE If signals are essentially concentrated in time on $[-T,T]$ and in frequency on $[-W,W]$, set the effective dimension $D_{\rm eff}\approx 2TW/\pi$.
\STATE Choose a finite reconstruction space of dimension fit into this concentration model and require a sample budget at least on this scale, $K+L \approx D_{\rm eff}$.
\STATE Solve the finite stacked system as in the finite case. \vspace{0.3em}
\item[\textbullet] \textbf{Exact reconstruction under Fourier-symmetric Sobolev space.}
\STATE If $\mathcal X=\mathcal H$, use the Hermite expansions in both time and Fourier domains: 
\begin{align}
    f&=\sum_{n=0}^{\infty}\alpha_n\varphi_n \notag \\
    \widehat{f}&=\sum_{n=0}^{\infty}\alpha_n (-i)^n\varphi_n \notag 
\end{align} 
\STATE Check if $(\Lambda,M)$ is sufficiently dense according to Prop.~\ref{prop:uniqueness_pair_inH} for uniqueness
\ENSURE the finite approximation in the first case, or the exact uniqueness conclusion in the second case
\end{algorithmic}
\end{algorithm}

\subsection{RKHS sampling with two-sided samples}
We begin providing the formal mathematical framework for two-sided RKHS sampling.
\begin{definition}[Two-sided RKHS]\label{def:twosidedRKHS} Let $\mathcal X$ be a Hilbert space of functions $f:X\rightarrow\mathbb C$, where $X$ is a domain equipped with the Fourier transform. We call $\mathcal{X}$ a two-sided reproducing kernel Hilbert space if, writing $X'$ for the Fourier dual domain,
for every $t\in X$ the point evaluation functional in the time domain
$$l_t:\mathcal{X}\rightarrow\mathbb C,\ f\mapsto f(t)$$
and for every $\omega\in X'$ the point evaluation functional in the Fourier domain
$$\hat l_\omega:\mathcal{X}\rightarrow\mathbb C,\ f\mapsto \hat f(\omega)$$
are linear and continuous.
\end{definition}
We will focus on $X=\mathbb R$, but the definition is valid for other domains, e.g.~$X=\mathbb R^d$. The Fourier functional is of the form $\hat l_\omega(f)=\hat{f}(\omega)=F[f](\omega)$.
By the Riesz representation theorem, the functionals $l_t$, $\hat{l}_\omega$ admit representations of the form $l_t(f)=\langle f, K_t\rangle_{\mathcal X}$ and $\hat l_\omega(f)=\langle f, L_\omega\rangle_{\mathcal X}$ for certain representers $K_t,L_\omega\in\mathcal X$. Choosing $f=K_t$ in the last formula thus yields the pointwise identity
$$\widehat{K_t}(\omega)=\langle K_t, L_\omega\rangle_{\mathcal X}=\overline{L_\omega(t)}.$$
In case that $X=\mathbb R$ and $K_t$ is Fourier integrable then under the unitary convention
$${L_\omega(t)}=\overline{\widehat{K_t}(\omega)}=\frac{1}{\sqrt{2\pi}}\int_\mathbb{R}K(t,s)e^{i\omega s}\textnormal{d} s.$$
To formally justify the reduction of two-sided sampling to the finite system of equations~\eqref{eq:RKHSOptimization}, we need a representer theorem, see~\cite{Wahba1999,Christmann2008}.
\begin{theorem}[Two-sided representer theorem]\label{thm:representer}
Let $(\mathcal X,\langle\cdot,\cdot\rangle_\mathcal X)$ be a two-sided RKHS. Fix data points $t_0,...,t_{K-1}$ and $\omega_0,...,\omega_{L-1}$. Let $\Psi:\mathbb{C}^{K+L}\rightarrow\mathbb{R}\cup\{\infty\}$ be any function and let $\Omega:[0,\infty)\rightarrow\mathbb{R}\cup\{\infty\}$ be strictly increasing. Consider the optimization problem
\begin{align*}
\min_{f\in\mathcal X}\Bigg\{\Psi\Big(f(t_0),...,f(t_{K-1}),\hat{f}(\omega_0),...,\hat{f}(\omega_{L-1})\Big)&\\
+&\Omega(||{f}||_\mathcal{X})\Bigg\}.\end{align*}
If a minimizer $f^*\in\mathcal X$ exists, then it is of the form
$$f^*(t)=\sum_{i=0}^{K-1}\alpha_iK_{t_i}(t)+\sum_{i=0}^{L-1}\beta_i L_{\omega_i}(t),$$
where $K_{t}$ are Riesz representers of time domain point evaluation functionals and $L_{\omega}$ are representers of frequency domain point evaluation functionals. Moreover, $L_\omega(t)=\overline{\widehat{K_t}(\omega)}$. In the Euclidean case $X=\mathbb R$, if $K_t$ is Fourier integrable, then
$$L_\omega(t)=\frac{1}{\sqrt{2\pi}}\int_\mathbb{R}K(t,s)e^{i\omega s}\textnormal{d} s.$$
\end{theorem}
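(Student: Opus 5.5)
The plan is to follow the classical orthogonal-decomposition proof of the representer theorem, using the Riesz representers supplied by Definition~\ref{def:twosidedRKHS}.

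\textbf{Step 1: the finite-dimensional subspace.} Since $\mathcal X$ is a two-sided RKHS, the functionals $l_{t_i}$ and $\hat l_{\omega_j}$ are continuous. By Riesz there are $K_{t_i},L_{\omega_j}\in\mathcal X$ with
\[
f(t_i)=\langle f,K_{t_i}\rangle_{\mathcal X},\qquad \hat f(\omega_j)=\langle f,L_{\omega_j}\rangle_{\mathcal X}.
\]
Set $\mathcal V=\mathrm{span}\{K_{t_0},\dots,K_{t_{K-1}},L_{\omega_0},\dots,L_{\omega_{L-1}}\}$. This subspace is finite-dimensional, hence closed, so $\mathcal X=\mathcal V\oplus\mathcal V^\perp$.

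\textbf{Step 2: the orthogonal part does not affect the data.} Let $f^*$ be a minimizer and write $f^*=v+w$ with $v\in\mathcal V$ and $w\in\mathcal V^\perp$. Then
\[
f^*(t_i)=\langle v,K_{t_i}\rangle+\langle w,K_{t_i}\rangle=v(t_i),
\]
and in the same way $\hat f^*(\omega_j)=\hat v(\omega_j)$. Hence the term $\Psi$ takes the same value at $f^*$ and at $v$.

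\textbf{Step 3: the norm forces $w=0$.} By Pythagoras, $\|f^*\|^2=\|v\|^2+\|w\|^2$. If $w\neq0$, then $\|v\|<\|f^*\|$, and since $\Omega$ is strictly increasing, $\Omega(\|v\|)<\Omega(\|f^*\|)$. The objective would then be strictly smaller at $v$, contradicting minimality. The one delicate case is an infinite objective value, where the inequality is not strict. The statement only asserts existence of a minimizer, so I would tacitly assume the minimum is finite, i.e.\ $\Omega$ is finite on $[0,\infty)$ and $\Psi$ is finite at the minimizer; I would flag this as a mild caveat. With $w=0$ we get $f^*=v\in\mathcal V$, which is exactly the claimed expansion.

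\textbf{Step 4: the identities for $L_\omega$.} Apply the Fourier reproducing property to $f=K_t$ and use conjugate symmetry:
\[
\widehat{K_t}(\omega)=\langle K_t,L_\omega\rangle=\overline{\langle L_\omega,K_t\rangle}=\overline{L_\omega(t)}.
\]
For $X=\mathbb R$ with $K_t\in L^1$, write $K_t(s)=K(s,t)$ and conjugate the unitary Fourier integral:
\[
L_\omega(t)=\frac{1}{\sqrt{2\pi}}\int_{\mathbb R}\overline{K(s,t)}\,e^{i\omega s}\,\mathrm ds .
\]
Hermitian symmetry of a reproducing kernel, $\overline{K(s,t)}=K(t,s)$, then gives the stated formula.

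The only real obstacle is bookkeeping. One must keep track of which argument of $K$ the transform acts on, and handle the edge cases of infinite values in Step~3. Neither step is deep.
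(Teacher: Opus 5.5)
Your argument is correct and is essentially the paper's. The paper does not write out the projection step: it invokes the Dinuzzo--Sch\"olkopf representer theorem for finitely many continuous linear functionals, which rests on exactly your orthogonal projection onto $\mathrm{span}\{K_{t_i},L_{\omega_j}\}$, applies it to $l_{t_i}$ and $\hat l_{\omega_j}$, and obtains $L_\omega(t)=\overline{\widehat{K_t}(\omega)}$ and the integral formula as in your Step~4. Your finiteness caveat is genuinely needed: if $\Psi\equiv\infty$, every $f\in\mathcal X$ is a minimizer, so the statement as written requires a finite minimum. The part of the caveat about $\Omega$ is automatic, though, since a strictly increasing map on $[0,\infty)$ can never take the value $\infty$.
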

This justifies the reduction to the finite system of equations~\eqref{eq:toJustify}.
\begin{proof}
    The proof is a consequence of a wider principle that representer theorems hold for continuous linear functionals~\cite{NIPS2012_eb160de1} that even extends to Banach spaces~\cite{Zhang2009RKBS,Szehr2020}. We use the formulation of~\cite{NIPS2012_eb160de1}, which asserts that if a minimizer exists for the problem
    \begin{align*}
\min\Bigg\{\Psi\Big(L_0[f],...,L_{N-1}[f])\Big)
+\Omega(||{f}||_\mathcal{X})\Bigg\},\end{align*}
where $L_0,...,L_{N-1}$ are continuous linear functionals, then the minimizer is of the form $f^*=\sum_{i=0}^{N-1}\alpha_iR_i$ with $R_i$ the representer of $L_i$. By assumption both time and Fourier domain point evaluation functionals are continuous, and we computed the explicit form of $L_\omega(t)$ above.
\end{proof}
\begin{remark}\label{rem:countable}
    The statement remains valid for a countable family of functionals. In this case 
    $f^*\in\overline{\textnormal{span}\{R_n,\:n\in\mathbb N\}}$ and for suitable sequences of coefficients
    $$ \sum_{i=0}^{K}\alpha_i^{(K)}K_{t_i} + \sum_{i=0}^{L}\beta_i^{(L)} L_{\omega_i}\rightarrow f^*$$
    as $K,L\rightarrow\infty$.
\end{remark}

We illustrate this representer theorem with the familiar example $\mathcal H$.
\subsubsection{The Fourier-symmetric Sobolev space as a two-sided RKHS}
The Fourier functionals on $\mathcal H$ are continuous. We have that
\begin{align*}|\hat{f}(\omega)|&\leq\frac{1}{\sqrt{2\pi}}\int_\mathbb R|f(t)|\sqrt{1+t^2}\frac{\textnormal{d}t}{\sqrt{1+t^2}}\\
&\leq\left(\frac{\pi}{2}\int_\mathbb{R}|f(t)|^2(1+t^2)\textnormal{d}t\right)^{1/2},
\end{align*}
where the first inequality uses the definition of the Fourier transform and second inequality is Cauchy-Schwarz. The right-hand side is bounded since $f\in\mathcal H$. Making use of the inverse Fourier transform, the same argument shows that the point evaluation functionals are continuous.

The underlying vector space $\mathcal H$ admits several equivalent Hilbert norms that yield RKHS structures. A canonical choice that extends the finite dimensional discussion and preserves the symmetry with respect to the Fourier transform is to interpret the positive operator $H$ as an infinite Gram matrix: Consider the scalar product defined for the Hermite basis expansions $f=\sum_{n=0}^\infty \alpha_n\varphi_n$, $g=\sum_{n=0}^\infty \beta_n\varphi_n$ by
$$\langle f,g\rangle_{\mathcal H}=\sum_{n=0}^\infty {\alpha}_n\bar{\beta}_n E_n.$$
The symmetric and positive function 
$$K(x,y)=\sum_{n=0}^\infty E_n^{-1}\varphi_n(x)\varphi_n(y)$$
induces a reproducing kernel.
Using the completeness of Hermite functions and the eigenvalue equation of $H$, direct computation reveals that
\begin{align*}
\langle f,K_y\rangle_{\mathcal H} &= \sum_{n=0}^\infty {\alpha}_nE_nE_n^{-1}\varphi_n(y)=f(y).
\end{align*}
This yields an explicit RKHS reconstruction scheme for two-sided sampling. The presented kernel representation together with the representer theorem~\ref{thm:representer} yield the linear system~\eqref{eq:RKHSOptimization}, which can be solved by standard tools such as the pseudoinverse. In the finite-dimensional case the kernel can be truncated to a finite sum.

There are two paths to ensure uniqueness of reconstruction:

First, uniqueness is guaranteed if $\Psi$ is convex and the regularizer $\Omega(||f||_\mathcal X)$ is strictly convex in $f$. In this case the entire optimization functional is strictly convex and the minimizer is unique if it exists~\cite{NIPS2012_eb160de1}. The most important example is Tikhonov regularization, where the functional $\Psi+\lambda||f||^2_\mathcal X$ is strictly convex if $\Psi$ is convex.

Second, there is the injectivity of the sampling operator itself. This is ensured if the two-sided samples form a uniqueness pair. In this situation unique reconstruction holds even in the absence of a regularizer. Naturally, finding precise conditions is much harder in this situation. In this situation the formula of Rem.~\ref{rem:countable} takes the form of a unique representer. This leads to reconstruction formulas structurally analogous to those of Radchenko-Viazovska~\cite{radchenko2019fourier}. Notice, however, that their reconstruction nodes $\Lambda=M=\{\sqrt{2\pi n}\}_{n\in\mathbb{Z}_{\geq0}}$ are located at the edge of the regions of Prop.~\ref{prop:uniqueness_pair_inH} since
\[
\lim_{n\rightarrow\infty}\sqrt{2\pi n}(\sqrt{2\pi (n+1)}-\sqrt{2\pi n})=\pi.
\]
No immediate conclusion about these nodes is possible here. Naturally, this path to the study of unique reconstruction is much harder. Algorithm~\ref{alg:infinite_rkhs} summarizes the RKHS discussion above.

\begin{algorithm}[h]
\caption{Two-sided RKHS sampling in infinite-dimensional spaces}
\label{alg:infinite_rkhs}
\begin{algorithmic}[1]
\REQUIRE Two-sided RKHS $\mathcal X$ with kernel $K$; samples on sets $\Lambda=\{t_i\}$ and $M=\{\omega_j\}$.
\STATE Set $K_x(s)=K(s,x)$ and define $L_\omega(x)=\overline{\widehat{K_x}(\omega)}$.
\STATE Choose finite subsets $\Lambda_N\subset\Lambda$ and $M_N\subset M$.
\STATE Assemble the finite RKHS block matrix as in~\eqref{eq:RKHSOptimization} and solve in order to obtain
\begin{equation}
    f_N^\star \in \operatorname{span}\{K_t:t\in\Lambda_N\} + \operatorname{span}\{L_\omega:\omega\in M_N\} \notag
\end{equation}\vspace{-1em}
\STATE Increase $\Lambda_N,M_N$ to refine the approximation.
\STATE By Rem.~\ref{rem:countable}, the countable sample solution, when it exists, lies in
$\overline{\operatorname{span}\{K_t,L_\omega:t\in\Lambda,\omega\in M\}}$.
\STATE If $f_N^\star\to f^\star$ in $\mathcal X$, return the limiting representer reconstruction $f^\star$.
\STATE For uniqueness, use injectivity of the two-sided sampling operator; for $\mathcal X=\mathcal H$, Prop.~\ref{prop:uniqueness_pair_inH} gives a sufficient density condition. Strict convex regularization gives uniqueness of the regularized solution.
\ENSURE Finite approximants $f_N^\star$ and, when convergent, a limiting reconstruction $f^\star$; uniqueness conclusion
\end{algorithmic}
\end{algorithm}

\section{Numerical Experiments}\label{sec:numerical_exps}

\subsection{Comparison under a fixed sample budget}

In this section, we consider $N$-dimensional Hermite function space $$\mathcal{H}_{N-1}=\mathrm{span}\{\varphi_0,\dots,\varphi_{N-1}\},$$
and compare one- and two-sided sampling under the same total sample budget
\[
D = K+L = N.
\]
We measure reconstruction stability by the condition number $\sigma_{\max}(A)/\sigma_{\min}(A)$ of the matrix $A$.
Smaller values indicate a better conditioned inversion problem and lower sensitivity to perturbations. 
\begin{figure}[h]
\centering
\includegraphics[width=0.48\textwidth]{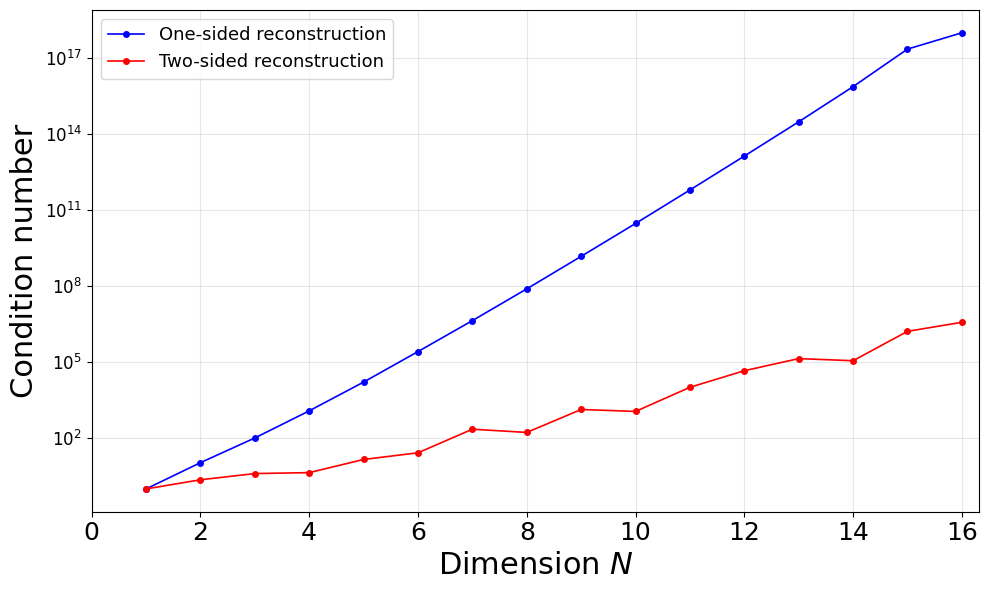}
\caption{Condition numbers for one-sided and two-sided sampling with fixed total budget $D=N$. Time samples are taken on $[1,2]$ and frequency samples on $[-1,0]$. In both cases, the samples are equispaced.}
\label{fig:plot1}
\end{figure}

In our first experiment, for  pure time-domain sampling, we take $K=D$ time samples and no frequency samples ($L=0$). For each $D$, the sampling points are chosen on an equispaced grid over the interval $[1.0,2.0]$ with spacing of $1/D$. For two-sided sampling, we retain the same total budget of $D$ samples but divide it between the time and frequency samples by
\[
K=\left\lceil \frac{D}{2}\right\rceil,\qquad L=D-K=\left\lfloor \frac{D}{2}\right\rfloor.
\]
The sampling points are then chosen on equispaced grids in time over $[1.0,2.0]$ and in frequency over $[-1.0,0]$.

As we can see in Fig.~\ref{fig:plot1}, the condition number increases with the dimension $N$, which indicates that the reconstruction problem becomes progressively more challenging in both cases even though the number of available samples $D$ also increases ($D=N$).
Nevertheless, two-sided reconstruction yields typically smaller condition numbers than one-sided sampling and is therefore more stable numerically.

\begin{figure}[h]
\centering
\includegraphics[width=0.48\textwidth]{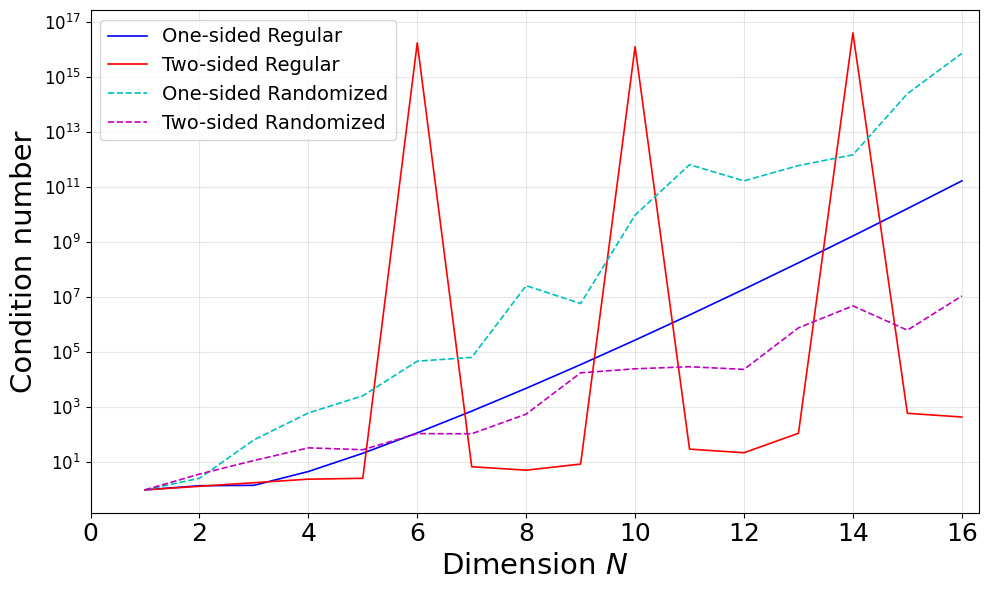}
\caption{Condition numbers when both time and frequency samples lie in $[-1,1]$. Equispaced two-sided sampling is singular for some $N$, whereas random sampling avoids these failures and remains better conditioned.}
\label{fig:plot2}
\end{figure}

This behavior does not persist for all configurations. 
In Fig.~\ref{fig:plot2}, where both the time and frequency samples are taken on equispaced grids in $[-1,1]$, the resulting mixed sampling configuration becomes non-invertible for some values of $N=D$. This is consistent with the uniqueness-pair discussion above: certain aligned choices of time and frequency nodes can fail to determine the signal uniquely, leading to a singular sampling matrix. By contrast, in the same figure, we can see that if the sampling points are chosen randomly (uniformly) rather than equispaced within the same $[-1,1]$ interval, two-sided sampling once again outperforms one-sided sampling consistently. 
This suggests that some degree of misalignment between the time and frequency sampling sets is useful. 
In particular, when the relevant uniqueness sets are not known a priori, randomization may serve as an effective practical strategy. Related forms of misalignment are also known to play a fundamental role in compressed sensing \cite{candes2006robust}.

\begin{figure}[h]
\centering
\includegraphics[width=0.48\textwidth]{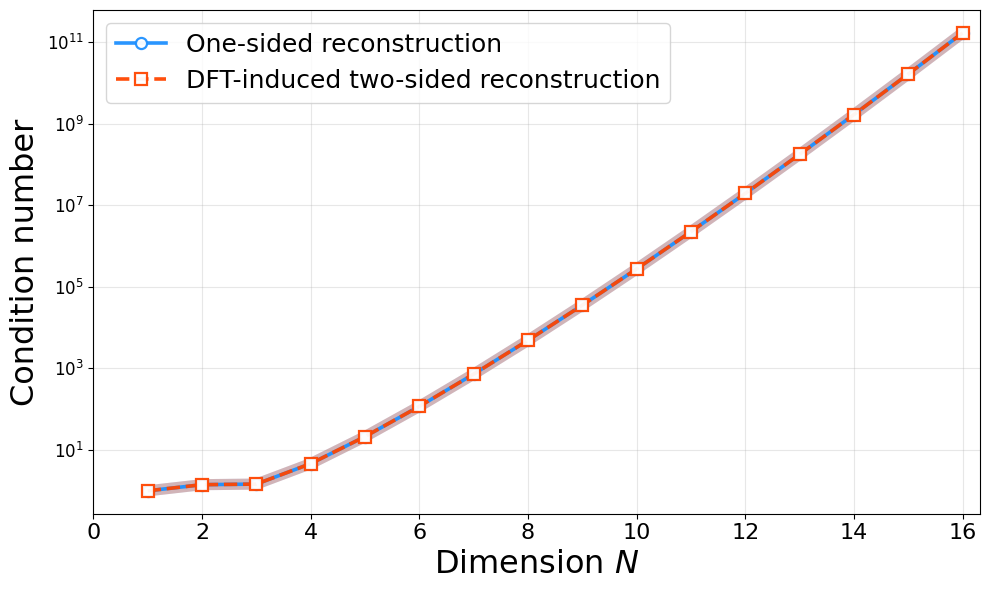}
\caption{Condition numbers are unchanged when part of the time-domain data is post-processed by the DFT and treated as frequency-domain data.}
\label{fig:plot4}
\end{figure}

Finally, we consider a different experiment. We begin with $D=N$ time-domain samples and apply the discrete Fourier transform to a subset of the measurement vector, treating the resulting coefficients as surrogate frequency-domain data. As expected, this leaves the condition number unchanged, cf.~Fig.~\ref{fig:plot4}: invertible post-processing of existing measurements cannot increase the information content.

\subsection{Bandlimited signal space}

We now turn to the standard sinc interpolation setting. As in the previous section, we investigate the condition number and compare the classical one-sided time-domain sampling with two-sided sampling under a fixed sampling budget $D=N$. 
More precisely, consider the finite-dimensional space
\[
 \mathcal{PW}_{\pi}^{N-1}=\mathrm{span}\{s_0,\dots,s_{N-1}\},
\]
where
\[
s_n(t)=\mathrm{sinc}(t-n),\qquad n=0,1,\dots,N-1,
\]
with \(T=1\). Under the unitary Fourier convention, we have
\[
\widehat{\mathrm{sinc}}(\omega)= \frac{1}{\sqrt{2\pi}}\mathbf{1}_{|\omega|\le \pi},
\qquad
\widehat{s_n}(\omega)= \frac{1}{\sqrt{2\pi}} e^{-i\omega n}\,\mathbf{1}_{|\omega|\le \pi}.
\]
In other words, \(\mathcal{PW}_{\pi}^{N-1}\) is bandlimited in the frequency domain and integer shifts in time translate into phase shifts in frequency. 

For frequency-domain sampling, we consider the interval \([-3,3]\), since sampling outside the range $[-\pi,\pi]$ carries no additional information.
In the time-domain, as opposed to the previous section, we consider sampling intervals that increase with increasing sample size \(D\), i.e., $[\left\lceil -\frac{D}{2} + 1\right\rceil, \cdots, \left\lfloor \frac{D}{2}\right\rfloor]$.
In both settings, the sampling locations are chosen uniformly at random from the corresponding intervals, as in Fig~\ref{fig:plot2}.

\begin{figure}[h]
\centering
\includegraphics[width=0.48\textwidth]{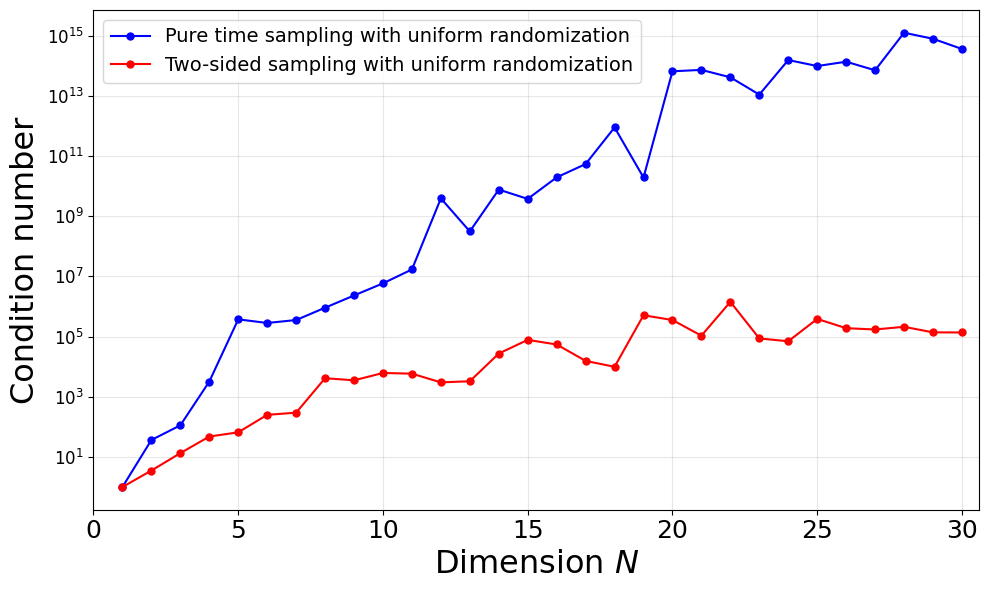}
\caption{Condition numbers in a finite-dimensional bandlimited space spanned by integer-shifted sinc functions. The time-sampling interval increases with the dimension of the function space, while the frequency-sampling interval is fixed at $[-3,3]$. Sampling points are chosen uniformly at random.}
\label{fig:sinc_example}
\end{figure}

Fig.~\ref{fig:sinc_example} shows that, as in the Hermite-generated case, the reconstruction problem becomes increasingly ill-conditioned as \(N\) grows. Moreover, the two-sided sampling scheme consistently yields smaller condition numbers than the corresponding one-sided scheme in the bandlimited setting as well.

\section{Spectrum Monitoring Application}\label{sec:spectrum_monitoring}

In spectrum monitoring, the goal is to observe and analyze the radio-frequency environment in order to detect signals of interest, identify interference, and ensure efficient spectrum usage \cite{axell2012spectrum,spectrumLunden2015}.
This is critical for applications such as telecommunication networks and defence systems.
In such settings, the receiver digitizes a time-domain signal and then applies a discrete Fourier transform to convert those samples into spectra.
However, acquiring and storing Nyquist-rate time samples over very large bandwidths is often prohibitively expensive.
In practice, this leads to memory bottlenecks due to insufficient storage.
Nevertheless, reconstructing the signal afterward is still required for many tasks.
For instance, once an interferer or anomaly is noticed in the spectrum, the underlying waveform is needed to characterize the signal.

\begin{figure}[h]
\includegraphics[width=0.48\textwidth]{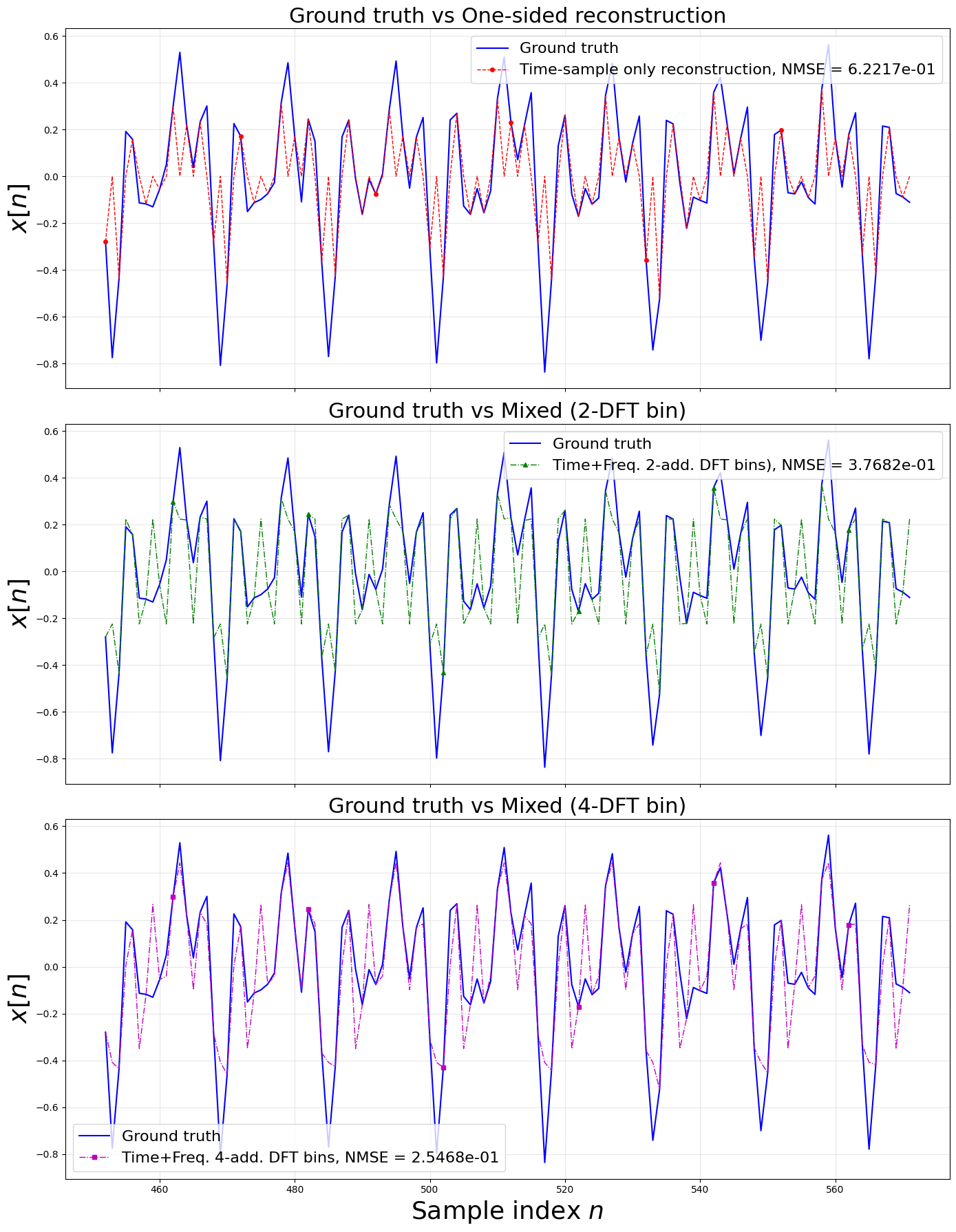}
\caption{Time-domain signal reconstruction in a spectrum-monitoring system: ground truth, reconstruction from time-domain samples only, reconstruction from time samples plus $2$ DFT bins, and reconstruction from time samples plus $4$ DFT bins.}
\label{fig:spectrum_monitoring}
\end{figure}

The signal to analyze is usually not available at full rate, because the raw time samples were too costly to retain. 
The receiver, however, has already produced the spectra, so some of the largest magnitude-bins can be retained at little cost.
Here, depending on the propagation environment, the observed radio channels may exhibit some sparsity in the frequency domain so that a small number of bins can carry substantial information. 
Recall, however, that our approach does not require sparsity in general.
These bins can serve a dual purpose by supporting both the operator's inspection and the subsequent reconstruction of the waveform. 
Indeed, as we demonstrate next, the reconstruction can be improved if the reduced time samples are supplemented with the frequency samples in the spectra in a joint reconstruction.

Namely, during acquisition, the system samples the signal over time and computes a discrete Fourier transform of length $Z$. 
For each window of $Z$ samples, it is stored every other time sample together with a few frequency bins of largest magnitude. 
The full spectrum of length $Z$ is computed as the data arrive, while only the largest complex bins are retained. 
At a later time, whether prompted by a trigger event or by an operator who wishes to recover a segment, the signal is rebuilt from the data stored as stated. 
We examine whether a block of $Z = 1024$ samples can be recovered when due to storage limits there is only access to $Z/2$ time samples from the block, taken as every other sample, together with the $2$ or $4$ strongest frequency bins.
The signal is composed of four tones and is corrupted by additive white Gaussian noise at an \(\texttt{SNR} = 16\). 

In Fig.~\ref{fig:spectrum_monitoring}, we compare three cases for the signal reconstruction. 
The first case uses only the stored $Z/2$ raw time samples, the second incorporates the $2$ bins of largest magnitude, and the third incorporates the $4$ bins of largest magnitude. 
Each reconstruction is performed in the least squares sense. 
It can be seen that incorporating the strongest complex frequency bins improves the reconstruction over the time samples alone, and $4$ bins improve it further over $2$. Fig.~\ref{fig:spectrum_monitoring} also reports the normalized MSE averaged over ten independent experiments. 
The error falls from $0.62$ with time samples alone, to $0.37$ with $2$ frequency bins, and to $0.25$ with $4$.

\section{Conclusion}

It is common in signal processing to treat time and frequency-domain information as if they are independent dimensions, with their relationship typically entering the discussion only through uncertainty principles. 
This article presents a complementary perspective by examining joint reconstruction from two-sided samples in both domains.
We analyzed this problem in finite-dimensional model spaces and related it to infinite-dimensional settings through RKHS and uniqueness results. 
These connections show that two-sided sampling is not only a computational paradigm in finite dimensions but also part of a broader functional-analytic framework.
While our article focused on several representative applications, we believe that the main message is broader. 
We expect that this two-sided viewpoint will be useful in other reconstruction problems where measurements in dual domains are naturally available and should be exploited jointly.


\section*{Acknowledgment}
The authors are grateful to Andrii Bondarenko, Visa Koivunen, and Kristian Seip for valuable discussions. This work was supported by the Swiss National Science Foundation, SNF grant No. CRSK-2\_229036.

\bibliographystyle{IEEEtran}
\bibliography{ref.bib}

@article{liang1994efficient,
  title={An efficient method for dynamic magnetic resonance imaging},
  author={Liang, Zhi-Pei and Lauterbur, Paul C},
  journal={IEEE Transactions on Medical Imaging},
  volume={13},
  number={4},
  pages={677--686},
  year={1994}
}

@ARTICLE{zhi1996constrained,
  author={Zhi-Pei Liang and Lauterbur, P.C.},
  journal={IEEE Engineering in Medicine and Biology Magazine}, 
  title={Constrained imaging: overcoming the limitations of the {Fourier} series}, 
  year={1996},
  volume={15},
  number={5},
  pages={126-132},
  doi={10.1109/51.537069}}

@article{weizman2016reference,
  title={Reference-based {MRI}},
  author={Weizman, Lior and Eldar, Yonina C and Ben Bashat, Dafna},
  journal={Medical Physics},
  volume={43},
  number={10},
  pages={5357--5369},
  year={2016}
}

@article{matsuki2009spectroscopy,
  title={Spectroscopy by integration of frequency and time domain information for fast acquisition of high-resolution dark spectra},
  author={Matsuki, Yoh and Eddy, Matthew T and Herzfeld, Judith},
  journal={Journal of the American Chemical Society},
  volume={131},
  number={13},
  pages={4648--4656},
  year={2009}
}

@article{ferreira2002interpolation,
  title={Interpolation and the discrete {Papoulis-Gerchberg} algorithm},
  author={Ferreira, Paulo Jorge SG},
  journal={IEEE Transactions on Signal Processing},
  volume={42},
  number={10},
  pages={2596--2606},
  year={2002},
  publisher={IEEE}
}

@inproceedings{hirani1996combining,
  title={Combining frequency and spatial domain information for fast interactive image noise removal},
  author={Hirani, Anil N and Totsuka, Takashi},
  booktitle={Proc. Annual Conference on Computer Graphics and Interactive Techniques},
  pages={269--276},
  year={1996}
}

@techreport{3gpp_ts38211,
  author      = "{3GPP}",
  title       = "{NR; Physical channels and modulation}",
  institution = "{3GPP}",
  number      = "{TS 38.211}",
  year        = "{2024}",
}

@article{gerchberg1974super,
  title={Super-resolution through error energy reduction},
  author={Gerchberg, RW},
  journal={Optica Acta: International Journal of Optics},
  volume={21},
  number={9},
  pages={709--720},
  year={1974}
}

@inproceedings{marques2006papoulis,
  title={The {Papoulis-Gerchberg} algorithm with unknown signal bandwidth},
  author={Marques, Manuel and Neves, Alexandre and Marques, Jorge S and Sanches, Joao},
  booktitle={International Conference Image Analysis and Recognition},
  pages={436--445},
  year={2006}
}

@ARTICLE{papoulis1975,
  author={Papoulis, A.},
  journal={IEEE Transactions on Circuits and Systems}, 
  title={A new algorithm in spectral analysis and band-limited extrapolation}, 
  year={1975},
  volume={22},
  number={9},
  pages={735-742},
  doi={10.1109/TCS.1975.1084118}}

@ARTICLE{axell2012spectrum,
  author={Axell, Erik and Leus, Geert and Larsson, Erik G. and Poor, H. Vincent},
  journal={IEEE Signal Processing Magazine}, 
  title={Spectrum Sensing for Cognitive Radio : State-of-the-Art and Recent Advances}, 
  year={2012},
  volume={29},
  number={3},
  pages={101-116},
  doi={10.1109/MSP.2012.2183771}}

@techreport{itur_sm2117_2018,
  author      = "{{ITU-R}}",
  title       = "{Data format definition for exchanging stored {I/Q} data for the purpose of spectrum monitoring}",
  number      = "{SM.2117-0}",
  address     = "{Geneva, Switzerland}",
  month       = "{Sep.}",
  year        = {2018},
  institution = "{International Telecommunication Union}",
  url         = {https://www.itu.int/rec/R-REC-SM.2117}
}

@ARTICLE{spectrumLunden2015,
  author={Lunden, Jarmo and Koivunen, Visa and Poor, H. Vincent},
  journal={IEEE Signal Processing Magazine}, 
  title={Spectrum Exploration and Exploitation for Cognitive Radio: Recent Advances}, 
  year={2015},
  volume={32},
  number={3},
  pages={123-140},
  doi={10.1109/MSP.2014.2338894}}

@book{bahai2004multi,
  title={Multi-carrier digital communications: {Theory} and applications of {OFDM}},
  author={Bahai, Ahmad RS and Saltzberg, Burton R and Ergen, Mustafa},
  year={2004},
  publisher={Springer}
}

@book{rappaport2010wireless,
  title={Wireless communications: Principles and practice},
  author={Rappaport, Theodore S},
  year={2010},
  publisher={Pearson Education}
}

@ARTICLE{landau1967,
  author={Landau, H.J.},
  journal={Proceedings of the IEEE}, 
  title={Sampling, data transmission, and the {Nyquist} rate}, 
  year={1967},
  volume={55},
  number={10},
  pages={1701-1706},
  doi={10.1109/PROC.1967.5962}}

@article{whittaker1915xviii,
  title={On the functions which are represented by the expansions of the interpolation-theory},
  author={Whittaker, Edmund Taylor},
  journal={Proceedings of the Royal Society of Edinburgh},
  volume={35},
  pages={181--194},
  year={1915},
  publisher={Royal Society of Edinburgh Scotland Foundation}
}

@article{candes2006robust,
  title={Robust uncertainty principles: Exact signal reconstruction from highly incomplete frequency information},
  author={Cand{\`e}s, Emmanuel J and Romberg, Justin and Tao, Terence},
  journal={IEEE Transactions on Information Theory},
  volume={52},
  number={2},
  pages={489--509},
  year={2006},
  publisher={IEEE}
}

@ARTICLE{unser2000,
  author={Unser, Michael},
  journal={Proceedings of the IEEE}, 
  title={Sampling-50 years after {Shannon}}, 
  year={2000},
  volume={88},
  number={4},
  pages={569-587},
  doi={10.1109/5.843002}}

@article{radchenko2019fourier,
  title={Fourier interpolation on the real line},
  author={Radchenko, Danylo and Viazovska, Maryna},
  journal={Publications math{\'e}matiques de l'IH{\'E}S},
  volume={129},
  number={1},
  pages={51--81},
  year={2019},
  publisher={Springer}
}

@article{zelent2025time,
  title={Time frequency localization in the {Fourier} Symmetric {Sobolev} space},
  author={Zelent, Denis},
  journal={arXiv:2505.04286},
  year={2025}
}

@article{szehr2025spectral,
  title={Spectral Criteria for Unique Signal Recovery from Two-Sided Sampling},
  author={Szehr, Oleg},
  journal={arXiv:2509.14953},
  year={2025}
}

@book{seip2004interpolation,
  title={Interpolation and sampling in spaces of analytic functions},
  author={Seip, Kristian},
  volume={33},
  year={2004},
  publisher={American Mathematical Society}
}

@article{kulikov2023fourier,
  title={Fourier uniqueness and non-uniqueness pairs},
  author={Kulikov, Aleksei and Nazarov, Fedor and Sodin, Mikhail},
  journal={arXiv:2306.14013},
  year={2023}
}

@article{ramos2022fourier,
  title={Fourier uniqueness pairs of powers of integers},
  author={Ramos, Jo{\~a}o PG and Sousa, Mateus},
  journal={J. Eur. Math. Soc.(JEMS)},
  volume={24},
  pages={4327--4351},
  year={2022}
}

@article{ehler2024abstract,
  title={An abstract approach to {Marcinkiewicz-Zygmund} inequalities for approximation and quadrature in modulation spaces},
  author={Ehler, Martin and Gr{\"o}chenig, Karlheinz},
  journal={Mathematics of Computation},
  volume={93},
  number={350},
  pages={2885--2919},
  year={2024}
}

@article{Slepian1961,
  title = {Prolate Spheroidal Wave Functions,  {Fourier} Analysis and Uncertainty - {I}},
  volume = {40},
  ISSN = {0005-8580},
  DOI = {10.1002/j.1538-7305.1961.tb03976.x},
  number = {1},
  journal = {Bell System Technical Journal},
  publisher = {Institute of Electrical and Electronics Engineers (IEEE)},
  author = {Slepian,  D. and Pollak,  H. O.},
  year = {1961},
  pages = {43–63}
}

@article{Slepian1961b,
  title = {Prolate Spheroidal Wave Functions,  {Fourier} Analysis and Uncertainty - {II}},
  volume = {40},
  ISSN = {0005-8580},
  DOI = {10.1002/j.1538-7305.1961.tb03976.x},
  number = {1},
  journal = {Bell System Technical Journal},
  publisher = {Institute of Electrical and Electronics Engineers (IEEE)},
  author = {Slepian,  D. and Pollak,  H. O.},
  year = {1961},
  pages = {65–84}
}

@article{Landau1962,
  title = {Prolate Spheroidal Wave Functions,  {Fourier} Analysis and Uncertainty-III: The Dimension of the Space of Essentially Time- and Band-Limited Signals},
  volume = {41},
  ISSN = {0005-8580},
  DOI = {10.1002/j.1538-7305.1962.tb03279.x},
  number = {4},
  journal = {Bell System Technical Journal},
  publisher = {Institute of Electrical and Electronics Engineers (IEEE)},
  author = {Landau,  H. J. and Pollak,  H. O.},
  year = {1962},
  pages = {1295–1336}
}

@article{Kulikov2021,
  title = {Fourier Interpolation and Time-Frequency Localization},
  volume = {27},
  ISSN = {1531-5851},
  DOI = {10.1007/s00041-021-09861-y},
  number = {3},
  journal = {Journal of Fourier Analysis and Applications},
  publisher = {Springer Science and Business Media LLC},
  author = {Kulikov,  Aleksei},
  year = {2021},
}

@article{Aronszajn1950,
  title = {Theory of reproducing kernels},
  volume = {68},
  ISSN = {1088-6850},
  DOI = {10.1090/s0002-9947-1950-0051437-7},
  number = {3},
  journal = {Transactions of the American Mathematical Society},
  publisher = {American Mathematical Society (AMS)},
  author = {Aronszajn,  N.},
  year = {1950},
  pages = {337–404}
}

@book{Wahba1999,
author = {Whaba, G.},
title = {Support vector machines, reproducing kernel {Hilbert} spaces and the
randomized GACV},
publisher = {Advances in Kernel Methods, MIT Press},
year = {1999}
}

@book{Christmann2008,
author = {Christmann, A. and Steinwart, I.},
title = {Support Vector Machines},
edition = {2},
publisher = {Information Science and Statistics, Springer},
year = {2008}
}

@article{Szehr2020,
  title = {Interpolation without commutants},
  volume = {84},
  ISSN = {1841-7744},
  DOI = {10.7900/jot.2019may21.2264},
  number = {1},
  journal = {Journal of Operator Theory},
  publisher = {Theta Foundation},
  author = {Szehr,  Oleg and Zarouf,  Rachid},
  year = {2020},
  pages = {239–256}
}

@inproceedings{NIPS2012_eb160de1,
 author = {Dinuzzo, Francesco and Sch\"{o}lkopf, Bernhard},
 booktitle = {Advances in Neural Information Processing Systems},
 editor = {F. Pereira and C.J. Burges and L. Bottou and K.Q. Weinberger},
 pages = {},
 publisher = {Curran Associates, Inc.},
 title = {The representer theorem for {Hilbert} spaces: a necessary and sufficient condition},
 volume = {25},
 year = {2012}
}

@article{Zhang2009RKBS,
  title = {Reproducing Kernel {Banach} Spaces for Machine Learning},
  volume = {10},
  journal = {Journal of Machine Learning Research},
  author = {Zhang, Haizhang and Xu, Yuesheng},
  year = {2009},
  pages = {2741-2775}
}

\end{document}